\documentclass[10pt,conference]{IEEEtran}
\textheight=8.7in

\usepackage{graphicx}
\usepackage{amsthm}
\usepackage{epsfig}
\usepackage{latexsym}
\usepackage{amsfonts}
\usepackage{here}
\usepackage{rawfonts}
\usepackage[latin1]{inputenc}
\usepackage[T1]{fontenc}
\usepackage{calc}
\usepackage{capitalgreekitalic}
\usepackage{url}
\usepackage{enumerate}
\usepackage{color}
\usepackage[tbtags]{amsmath}
\usepackage{amssymb}
\usepackage{upref}
\usepackage{epic,eepic}
\usepackage{times}
\usepackage{dsfont}
\usepackage{comment}
\usepackage{cite}
\usepackage{mathrsfs}
\usepackage{verbatim}
\usepackage{float}
\usepackage{chngcntr}
\usepackage{bbm}
\usepackage{caption}
\usepackage{subcaption}
\allowdisplaybreaks
\usepackage{algorithm}% http://ctan.org/pkg/algorithm
\usepackage{algpseudocode}% http://ctan.org/pkg/algorithmicx
\DeclareMathOperator*{\argmax}{argmax}
\DeclareMathOperator*{\argmin}{argmin}

%\newcommand{\R}{{\sf R\hspace*{-0.9ex}\rule{0.15ex}%
%       {1.5ex}\hspace*{0.9ex}}}

%\newcommand{\C}{\ensuremath{{\mathbb C}}}

%\newcommand{\row}[1]{{#1}_{\raisebox{-0.7em}[0em][-0.6em]-}}

%\newcommand{\diag}[1]{{#1}_{\smallsetminus}}
%\newcommand{\firmse}[5][]{\mathcal{E}_{#2,#3}^{#1}\left(#4,#5\right)}
%\newcommand{\firmse}[3]{{\mathcal E}_{#1,#2,#3}}

%\newcommand{\firmse}[2]{{\mathcal E}_{#1,#2}}

%\newcommand{\diffop}{\ensuremath{\text{d}}}
%\newcommand{\diffop}{\ensuremath{\delta}}

%\newcommand{\derivative}[2]{\ensuremath{{\mathcal D}_{\tiny{#1}}#2}}

%\newcommand{\mycaption}[1]{\vspace*{-1.0em}\caption{\bfseries\sffamily #1}\vspace*{-1.0em}}
%\newcommand{\mycaption}[1]{\vspace*{-0.5em}\caption{\bfseries\sffamily #1}\vspace*{-0.5em}}

%\newcommand{\analysismask}{{\boldsymbol {\mathcal A}}}
%\newcommand{\synthesismask}{{\boldsymbol {\mathcal S}}}

%\newcommand{\op}{\ensuremath{\operatorname{c2r}}}

%Accented bold letters

\newtheorem{theorem}{\bf Theorem}
\newtheorem{proposition}{\bf Proposition}
\let\oldproposition\proposition
\renewcommand{\proposition}{\oldproposition\normalfont}

\newtheorem{definition}{\bf Definition}
\let\olddefinition\definition
\renewcommand{\definition}{\olddefinition\normalfont}

%\newcommand{\qed}{\nobreak \ifvmode \relax \else
%  \ifdim\lastskip<1.5em \hskip-\lastskip
%  \hskip1.5em plus0em minus0.5em \fi \nobreak
%  \vrule height0.75em width0.5em depth0.25em\fi}

% invisible boxes of different sizes

\newlength{\totlinewidth}
\newcounter{substep}

  {\end{list}}

\newlength{\aligntop}
\setlength{\aligntop}{-0.53em}
\newlength{\alignbot}
\setlength{\alignbot}{-0.85\baselineskip}
\addtolength{\alignbot}{-0.1em} \makeatletter
\renewenvironment{align}{%
  \vspace{\aligntop}
  \start@align\@ne\st@rredfalse\m@ne
}{%
  \math@cr \black@\totwidth@
  \egroup
  \ifingather@
    \restorealignstate@
    \egroup
    \nonumber
    \ifnum0=`{\fi\iffalse}\fi
  \else
    $$%
  \fi
  \ignorespacesafterend%
  \vspace{\alignbot}\par\noindent
} \makeatother

\IEEEoverridecommandlockouts

%\graphicspath{~/figure}

% Example definitions.
% --------------------

\begin{document}
%\pagenumbering{gobble}% Remove page numbers (and reset to 1)
\clearpage
%\maketitle
%Title.
% ------
\title{\huge Joint  Millimeter Wave and
	Microwave Resources Allocation in Cellular Networks with Dual-Mode Base Stations}
%
% Single address.
% ---------------
\author{{Omid Semiari$^{1}$}, Walid Saad$^{1}$, and Mehdi Bennis$^{2}$\\
\authorblockA{\small $^{1}$Wireless@VT, Bradley Department of Electrical and Computer Engineering, Virginia Tech, Blacksburg, VA, USA,\\
	 Email: \protect\url{osemiari@vt.edu}, \protect\url{walids@vt.edu} \\
$^{2}$Center for Wireless Communications, University of Oulu, Finland, Email: \protect\url{bennis@ee.oulu.fi}
%%$^\ddag$Department of Electrical
%%    Engineering, Stanford University, Stanford, USA\\
%%Email: \protect\url{bmaham@stanford.edu,
%%$^\ddag$Dept. of ECE, Indian Institute of Science, Bangalore 560012, India\\
%Email: \protect\url{o.semiari@umiami.edu, walid@miami.edu, stefan.valentin@alcatel-lucent.com,bennis@ee.oulu.fi,poor@princeton.edu}
}\vspace*{-3em}
    \thanks{This research was supported by the U.S. National Science Foundation under Grant CNS-1460316.}%
  }

%
% For example:
% ------------
%\address{School\\
%   Department\\
%   Address}
%
% Two addresses (uncomment and modify for two-address case).
% ----------------------------------------------------------
%\twoauthors
%  {A. Author-one, B. Author-two\sthanks{Thanks to XYZ agency for funding.}}
%   {School A-B\\
%   Department A-B\\
%   Address A-B}
%  {C. Author-three, D. Author-four\sthanks{The fourth author performed the work
%   while at ...}}
%   {School C-D\\
%   Department C-D\\
%   Address C-D}
%
%\tableofcontents
%\pdfbookmarks
%\ninept
%
\maketitle
\thispagestyle{empty}

\begin{abstract}
The use of dual-mode base stations that can jointly exploit millimeter wave (mmW) and microwave ($\mu$W) resources is a promising solution for overcoming the uncertainty of the mmW environment. In this paper, a novel dual-mode scheduling framework  is proposed that jointly performs user applications (UA) selection and scheduling over $\mu$W and mmW bands. The proposed scheduling framework  allows multiple UAs to run simultaneously on each user equipment (UE) and utilizes a set of \emph{context information}, including the channel state information (CSI) per UE, the delay tolerance and required load per UA, and the uncertainty of mmW channels, to maximize the quality-of-service (QoS) per UA. The dual-mode scheduling problem is then formulated as an optimization with minimum unsatisfied relations (min-UR) problem which is shown to be challenging to solve. Consequently, a long-term scheduling framework, consisting of two stages, is proposed. Within this framework, first, the joint UA selection and scheduling over $\mu$W band is formulated as a one-to-many matching game between the $\mu$W resources and UAs. To solve this problem, a novel scheduling algorithm is proposed and shown to yield a two-sided stable resource allocation. Second, over the mmW band, the joint context-aware UA selection and scheduling problem is formulated as a 0-1 Knapsack problem and a novel algorithm that builds on the Q-learning algorithm is proposed to find a suitable mmW scheduling policy while adaptively learning the UEs' line-of-sight probabilities. Furthermore, it is shown that the proposed scheduling framework can find an effective scheduling solution, over both $\mu$W and mmW, in polynomial time. Simulation results show that, compared with conventional scheduling schemes, the proposed approach significantly increases the number of satisfied UAs while improving the statistics of QoS violations and enhancing the overall users' quality-of-experience.
\end{abstract}
\vspace{-0cm}
%{\small \emph{Index Terms}--- Millimeter wave; matching games; heterogeneous networks; 5G.}
\section{Introduction} \vspace{-0cm}
Communication at high frequency, millimeter wave (mmW) bands is seen as promising approach to overcome the scarcity of the radio spectrum while providing significant capacity gains for tomorrow's wireless cellular networks \cite{6736746,Rangan14,Ghosh14}. 
%Furthermore, advanced user equipments (UEs) allow users to run multiple applications simultaneously, for example, downloading a file while having a conversation over Skype. Consequently, each multi-tasking UE may require a heterogenous traffic with mixed quality-of-service (QoS) requirements for each of its user applications (UAs) from the network. 
However, field measurements \cite{Rangan14} have shown that the availability of mmW links can be highly intermittent, due to various factors such as blockage by different obstacles. Therefore, meeting the stringent quality-of-service (QoS) constraints of delay-sensitive applications, such as HDTV and video conferencing, becomes more challenging at mmW frequencies compared to sub-6 GHz frequencies~\cite{shokri,7381698,7010536,5783993, 4689210,7010539,Rangan14}.

Such strict requirements can be achieved by deploying dual-mode small base stations (SBSs) that can support high data rates and QoS by leveraging the available bandwidth at both mmW and microwave ($\mu$W) frequency bands \cite{5783993}. Indeed, in order to provide robust and reliable communications, mmW networks must coexist with small cell LTE networks that operate at the conventional $\mu$W band \cite{7010536,5783993, 4689210,7010539,7430349}. However, differences in signal propagation characteristics  and in the available bandwidth lead to a significant difference in the achievable rate and the QoS over mmW and $\mu$W frequency bands, thus, yielding new challenges for joint mmW-$\mu$W user scheduling \cite{Wells,7430349}. In addition, QoS provisioning in dual-mode mmW-$\mu$W networks requires overcoming two key challenges:
1) a joint scheduling over both frequency bands is required, since resource allocation over one band will affect the allocation of the resources over the other frequency band and 2) the QoS constraints per user application (UA) will naturally dictate whether the traffic should be served via mmW resources, $\mu$W resources, or both. Therefore, robust and efficient scheduling algorithms for dual-mode SBSs are required that exploit \emph{context information per UA}, including the channel state information (CSI), maximum tolerable delay, and the required load to maximize users' quality-of-experience (QoE).

To address these scheduling challenges, a number of recent works have emerged in \cite{5783993,4689210,7010536,7381698,7010539,7430349} and \cite{Mezzavilla16,Singh16,Wang16,omid16,7430349,7572076}. %In \cite{7010537}, the authors analyze transceiver architectures for handover between different mmW frequency bands for hybrid mmW networks. 
The work in \cite{5783993} provides an overview on possible mmW-$\mu$W dual-mode architectures that can be used to transmit control and data signals, respectively, at $\mu$W and mmW frequency bands. To cope with the intermittent mmW link quality, the authors in \cite{Mezzavilla16} formulate the handover decision problem as a Markov decision process (MDP) in mmW networks. In addition, the work in \cite{Singh16} studies the problem of radio access technology (RAT) selection and traffic aggregation where each user can simultaneously be connected to multiple BSs. In \cite{Wang16}, the authors develop an RAT selection scheme for mmW-$\mu$W networks via a multi-armed bandit problem that aims to minimize the cost of handoffs for the UEs. \textcolor{black}{Furthermore, the authors in \cite{7572076} propose a cross-layer resource allocation scheme for full-duplex communications at the 60 GHz mmW frequency band.}

Although interesting, the body of work in \cite{5783993}, \cite{Mezzavilla16,Singh16,Wang16}, and \cite{7572076} does not address the scheduling problem in mmW-$\mu$W networks. In fact, \cite{Mezzavilla16,Singh16,Wang16} focus only on the cell association problem without taking into account, explicitly, the joint allocation of mmW and $\mu$W resources. Moreover, existing works such as in \cite{Mezzavilla16} and \cite{7572076} are solely focused on the mmW network, while completely neglecting the impact of the communications over the $\mu$W frequencies.    

In \cite{7381698}, the authors propose an energy-efficient resource allocation scheme for cellular networks, leveraging both $\mu$W and unlicensed  $60$ GHz mmW bands. \textcolor{black}{In \cite{7430349}, the resource allocation problem for ultra-dense mmW-$\mu$W cellular networks is studied under a model in which the cell association is decoupled in the uplink for mmW users. However, this work does not consider any QoS constraint in mmW-$\mu$W networks.} The problem of QoS provisioning for mmW networks is studied in \cite{7010536,4689210,7010539}, and \cite{omid16}. In \cite{7010536}, the authors propose a scheduling scheme that integrates device-to-device mmW links with 4G system to bypass the blocked mmW links. The work in \cite{4689210} presents a mmW system at $60$ GHz for supporting uncompressed high-definition (HD) videos for WLANs. In \cite{7010539}, the authors evaluated key metrics to characterize multimedia QoS, and designed a QoS-aware multimedia scheduling scheme to achieve the trade-off between performance and complexity.

Nonetheless, \cite{4689210} and \cite{7010539} do not consider multi-user scheduling and multiple access in dual-mode networks. In addition, conventional scheduling mechanisms, such as in \cite{7381698,7010536,4689210}, and \cite{7010539}, identify each UE by a single traffic stream with a certain QoS requirement. In practice, however, recent trends have shown that users run multiple applications simultaneously, each with a different QoS requirement. \textcolor{black}{Although the applications at a single device experience the same wireless channel, they may have different QoS requirements. For example, the QoS requirements for an interactive video call are more stringent than updating a background application or downloading a file. With this in mind, each user's QoE must be defined as a function of the number of QoS-satisfied UAs}. Accounting for precise, application-specific QoS metrics is particularly important for scheduling mmW resources whose channel is highly variable, due to large Doppler spreads and short channel coherence time. In fact, conventional scheduling approaches fail to guarantee the QoS for multiple applications at a single UE.

In \cite{omid16}, we studied the problem of the resource management with QoS constraints for dual-mode mmW-$\mu$W SBSs. However, our previous work assumes that mmW resources are allocated to the users opportunistically and the SBS does not dynamically determine the probability of being at a line-of-sight (LoS) link with the UEs. Such a resource allocation approach may not be efficient, since it may schedule users with low probability of LoS over the mmW band which effectively reduces the available resources for the users with high LoS probability. 
%Moreover, \cite{omid16} does not take into account the beam training overhead over the mmW band. 
In addition, we will extend the results provided in \cite{omid16} by proposing a Q-learning model to dynamically learn the LoS probability per user and exploit this information to enhance the scheduling over mmW band.

\textcolor{black}{The main contribution of this paper is to propose a novel, context-aware scheduling framework for enabling a dual-band base station to jointly and efficiently allocate both mmW and $\mu$W resources to user applications. This proposed context-aware
scheduler allows each user to seamlessly run multiple applications simultaneously, each having its own distinct QoS constraint. To this end, the proposed scheduling problem is formulated as an optimization problem with minimum unsatisfied relations (min-UR) and the goal is to maximize the number of satisfied user applications (UAs). To solve this NP-hard problem, a novel scheduling framework is proposed that considers a set of context information composed of the UAs' tolerable delay, required load, and the line-of-sight (LoS) probability, to jointly select and schedule UAs over the $\mu$W and the mmW frequency bands.
 %This proposed approach is composed of two stages. First, over $\mu$W band, the scheduler assigns priority to UAs based on the context information of each UA in order to maximize the number of satisfied UAs. 
 The resource allocation problem at $\mu$W band is modeled as a matching game that aims to assign resource blocks (RBs) to the candidate UAs. To solve this game, a novel algorithm is proposed that iteratively solves the UA selection and the resource allocation problems. We show that the proposed algorithm is guaranteed to yield a two-sided stable matching between UAs and the $\mu$W RBs. Over the mmW band, the scheduler assigns priority, based on the context information, to the remaining UAs that were not scheduled over the $\mu$W band. 
 %In particular, we propose a Q-learning model that assigns positive and negative rewards, respectively, to LoS and NLoS transmissions to users over past time slots. This learning model allows determining which UAs are more preferred to be scheduled over the mmW band in a way to maximize the average reward of each UA. This information will be used jointly with other context information for UA selection. 
 Consequently, over the mmW band, we show that the scheduling problem can be cast as a 0-1 Knapsack problem. To solve this problem, we then propose a novel algorithm that allocates the mmW resources to the selected UAs. Moreover, we show that the proposed, two-stage scheduling framework can solve the context-aware dual-band scheduling problem in polynomial time with respect to the number of UAs. Simulation results show that the proposed approach significantly improves the QoS per application, compared to the proportional fair and round robin schedulers. }

The rest of this paper is organized as follows. Section \ref{sec:model} presents the problem formulation. Section \ref{Sec:III} presents the proposed context-aware scheduling solution over the $\mu$W band. The proposed context-aware scheduling solution over the mmW band is proposed in Section \ref{Sec:IV}. Simulation results are analyzed in Section \ref{Sec:V}. Section \ref{Sec:VI} concludes the paper.   
\begin{table*}[!t]
\footnotesize
\centering
\caption{Variables and notations}\vspace*{-0.2cm}
\begin{tabular}{|c|c||c|c|}
\hline
\bf{Notation} & \bf{Description} & \bf{Notation} & \bf{Description} \\
\hline
$M$& Number of UEs & $\mathcal{M}$ & Set of UEs\\
\hline
$\kappa_m$& Number of UAs per UE $m$& A& Total number of UAs\\
\hline
$\tau$& Time slot duration& $\tau'$& Beam training overhead\\
\hline
$L$& Path loss& $\pi_t$& Scheduling decision at time slot $t$\\
\hline
$K_1$& Number of $\mu$W RBs & $\mathcal{K}_1$& Set of $\mu$W RBs\\
\hline
$K_2$& Number of mmW RBs & $\mathcal{K}_2$& Set of mmW RBs\\
\hline
$\mathcal{G}_{t,1}$& Set of UAs to be scheduled over $\mu$W band& $\mathcal{G}_{t,2}$& Set of UAs to be scheduled over mmW band\\
\hline
$w_1$& Bandwidth of $\mu$W RBs& $w_2$& Bandwidth of mmW RBs\\
\hline
$g_{kt}$& $\mu$W channel over RB $k$ at time slot $t$& $\zeta \in \{0,1\}$& $\zeta=1$ if link is LoS, otherwise, $\zeta=0$. \\
\hline
$h_{kt}$& mmW channel over RB $k$ at time slot $t$& $p_{k,1}$& Transmit power over $\mu$W RB $k$\\
\hline
$\rho_a$&  LoS probability of the link for UA $a$& $p_{k,2}$& Transmit power over mmW RB $k$\\
\hline
$P_1$& Total transmit power over $\mu$W band& $P_2$& Total transmit power over mmW band \\
\hline
$\boldsymbol{r}$& Q-learning reward vector& $J$& Number of QoS classes \\
\hline
$\mathcal{A}$& Set of all UAs across all UEs& $\mathcal{A}_j$& Set of UAs with $j$-th QoS class\\
\hline
$b_a$& Total required bits for UA $a$& $b_a^{\textrm{rec}}(t)$& Received bits by UA $a$ during time slot $t$ \\
\hline
$\lambda_{t,1}$& Number of satisfied UAs over $\mu$W band& $\lambda_{t,2}$& Number of satisfied UAs over mmW band\\
\hline
\end{tabular}\label{tab1} \vspace{-2em}
\end{table*}

\section{System Model}\label{sec:model}
Consider the downlink of a dual-mode SBS that operates over both $\mu$W and mmW frequency bands. The coverage area of the SBS is a planar area with radius $d$ centered at $(0,0) \in \mathbb{R}^2$. Moreover, a set $\mathcal{M}$ of $M$ UEs is deployed randomly and uniformly within the SBS coverage. UEs are equipped with both mmW and $\mu$W RF interfaces which allow them to manage their traffic at both frequency bands \textcolor{black}{\cite{pi2011techniques}}. 
%In addition, an arbitrary UE $m \in \mathcal{M}$ will experience a LoS mmW connection with probability $\rho_m$. 
The antenna arrays of mmW transceivers can achieve an overall beamforming gain of $\psi(y_1,y_2)$ for a LoS UE located at $(y_1,y_2) \in \mathbb{R}^2$ \cite{Ghosh14}. \textcolor{black}{Meanwhile, the $\mu$W transceivers have conventional single element, omni-directional antennas to maintain low overhead and complexity at the $\mu$W frequency band \cite{5378666}}. In our model, each UE $m \in \mathcal{M}$ runs $\kappa_m$ UAs. We let $\mathcal{A}$ be the set of all UAs with $A=\sum_{m \in \mathcal{M}}\kappa_m$ as the total number of UAs across all UEs. \vspace{-1em}
\subsection{Channel Model and Multiple Access at mmW and $\mu$W Frequency Bands}
The downlink transmission time is divided into time slots of duration $\tau$. 
%The SBS updates the dynamic scheduling decisions every slot and UEs receive the traffic for each of their associated UAs at scheduled time-frequency resources.
For the $\mu$W band, we consider an orthogonal frequency division multiple access (OFDMA) scheme in which multiple UAs can be scheduled over $K_1$ resource blocks (RBs) in the set $\mathcal{K}_1$ at each time slot with duration $\tau$. Therefore, the achievable \emph{$\mu$W rate} for an arbitrary UA $a$ at RB $k$ and time slot $t$ is:\vspace{.2cm}
\begin{align}\label{rate2}
R_a(k,t)=w_1\log_2\left(1+\frac{p_{k,1}|g_{kt}|^2 10^{-\frac{L_1(y_1,y_2)}{10}}}{w_1N_0}\right).
\end{align}
Here, $w_1$ is the bandwidth of each RB at $\mu$W band, and $g_{kt}$ is the Rayleigh fading channel coefficient over RB $k$ at time slot $t$. The total transmit power at $\mu$W band, $P_1$, is assumed to be distributed uniformly among all RBs such that $p_{k,1}=P_1/K_1$. \textcolor{black}{This uniform power allocation assumption is
	due to the fact that at a high SNR regime, as is expected in small cells with relatively
	short-range links, optimal power allocation
	policies such as the popular water-filling algorithm will ultimately converge to a uniform power
	allocation \cite{5733445}}. The path loss $L_1(y_1,y_2)$ follows the log-distance model with parameters $\alpha_1$, $\beta_1$, and $\xi_1^2$ that represent, respectively, the path loss exponent, the path loss at $1$ meter distance, and the variance of the shadowing for the $\mu$W band.  

%Hereinafter, unless otherwise specified, we use ``\emph{time slot'' to refer to the $\mu$W time slot}. The scheduler allocates resources to UAs at the beginning of each time slot which remains unchanged for the next $\tau_2$ seconds. Since mmW operates at very high frequencies, its channel coherence time will be relatively smaller than that of the $\mu$W frequencies \cite{7010539}. 
%Therefore, we let $\tau_1=\tau_2/J(t)$, where $J(t)$ is the number of UAs that will be scheduled in the mmW band at time slot $t$. 

\textcolor{black}{Over the mmW band, directional transmissions are inevitable to overcome the significantly high path loss at the mmW frequencies. Therefore, the multiple access scheme at the mmW band should support directional transmissions, while maintaining low complex designs for transceivers. Thus, the SBS uses a time division multiple access (TDMA) scheme to schedule UAs \cite{Rangan14}, which is in line with the existing standards such as WirelessHD and IEEE 802.15.3c \cite{5936164}}. We let $\mathcal{G}_{t,2}$ be the set of UAs that must be scheduled over the mmW band at slot $t$. During each time slot, for UAs that are assigned to the mmW band, the SBS transmits to UA $a \in \mathcal{G}_{t,2}$ an OFDM symbol of duration $\tau_{a,t}$ composed of $K_2$ resource blocks (RBs). In practice, the mmW transceivers must align their beams during a \emph{beam training} phase, in order to achieve the maximum beamforming gain \cite{7218630}. This training phase will introduce a non-negligible overhead on the TDMA system, which can become particularly significant as the number of mmW users increases. Hence, a beam training overhead time $\tau'<\tau$ is considered per transmission to a UA over the mmW band. In practice, duration of $\tau'$ can reach up to $1.54$ milliseconds, depending on the beam resolution \cite{7218630}. Therefore, the effective time for data transmission to UAs in $\mathcal{G}_{t,2}$ will be $\sum_{a \in \mathcal{G}_{t,2}}\tau_{a,t}=\tau-|\mathcal{G}_{t,2}|\tau'$, where $|\mathcal{G}_{t,2}|$ denotes the cardinality of the set $\mathcal{G}_{t,2}$.

The large-scale channel effects over the mmW links follow the popular model of \cite{Ghosh14}:
\begin{align}\label{mwpathloss}
L_2(y_1,y_2)=\beta_2+\alpha_2 10\log_{10}(\sqrt{y_1^2+y_2^2})+\chi,
\end{align}
where $L_2(y_1,y_2)$ is the path loss at mmW frequencies for all UAs associated with a UE located at $(y_1,y_2) \in \mathbb{R}^2$. In fact, (\ref{mwpathloss}) is known to be the best linear fit to the propagation measurement in mmW frequency band\cite{Ghosh14}, where $\alpha_2$ is the slope of the fit and $\beta_2$, the intercept parameter, is the pathloss (dB) for $1$ meter of distance. In addition, $\chi$ models the deviation in fitting (dB) which is a Gaussian random variable with zero mean and variance $\xi_2^2$. Overall, the total achievable \emph{mmW rate} for UA $a$ at time slot $t$ is given by  
	\begin{align}\label{rate1}
	&R_a(t)=\notag\\
	&\begin{cases}
	\sum_{k=1}^{K_2}\!w_2\log_2\!\left(\!\!1\!+\!\frac{p_{k,2}\psi(y_1,y_2)|h_{kt}|^2 10^{-\frac{L_2(y_1,y_2)}{10}}}{w_2 N_0}\right), & \zeta_{at}=1,\\
	0, &\zeta_{at}=0,
	\end{cases}
	\end{align}
where $\zeta_{at}=1$ indicates that a LoS link is feasible for UA $a$, otherwise, $\zeta_{at}=0$ and the link is blocked by an obstacle. In fact, $\zeta_{at}$ is a Bernoulli random variable with probability of success $\rho_{a}$, and is identical for all UAs that are run by the same UE. Moreover, $w_2$ is the bandwidth of each RB, \textcolor{black}{$h_{kt}$ is the Rician fading channel coefficient at RB $k$ of slot $t$ \cite{7481410}}, and $N_0$ is the noise power spectral density. Furthermore, $p_{k,2}$ denotes the SBS transmit power at RB $k$ of mmW frequency band. The total transmit power at mmW band, $P_2$, is assumed to be distributed uniformly among all RBs, such that $p_{k,2}=P_2/K_2$.
\begin{figure}
\centering
\centerline{\includegraphics[width=\columnwidth]{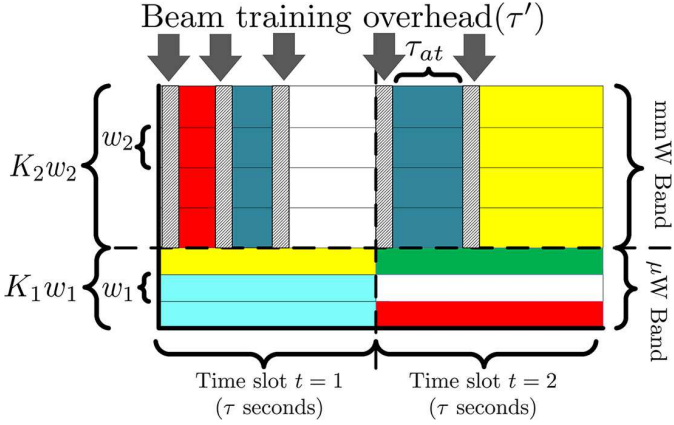}}\vspace{-0cm}
\caption{\small Example of resource allocation of the dual-band configuration. Colors correspond to different UAs that may run at different UEs.}\vspace{-0em}
\label{model}
\end{figure}

%The total transmit power at mmW band, $P_1$, is assumed to be distributed uniformly among all RBs, i.e., $p_{k,1}=P_1/K_1$.

Let $\mathcal{G}_{t,1}$ be the set of UAs that must be scheduled over the $\mu$W band at time slot $t$. During each time slot, a UA can be scheduled only at one frequency band, i.e., $\mathcal{G}_{t,1} \cap \mathcal{G}_{t,2}= \emptyset$. 
%In fact, $\tau_1$ depends on the total number of active UAs running at UEs that are within the LoS range from the SBS, as well as the mmW channel coherence time $\tau_c$. Hence, it is insightful to choose $\tau_1=\min \left\{\tau_c, \tau_2/\phi \right\}$, where $\phi$ is the number of UAs at the LoS link from SBS. Furthermore, we define $J(t)=\tau_2/\tau_1$ as the number of mmW time slots at $t$-th $\mu$W time slot.

The proposed dual-band multiple access scheme is shown in Fig. \ref{model}, where each color identifies a single, distinct UA. \vspace{-1em}  
\subsection{Traffic Model with QoS Constraints}
We assume a non-full buffer traffic model in which an arbitrary UA $a$ has a total of $b_a$ bits of data to receive. In addition, each UA has an application-specific tolerable delay which specifies its QoS class, as formally defined next.
\begin{definition}
The \textit{QoS class}, $\mathcal{A}_j$, is defined as the set of all UAs stemming from all UEs that can tolerate a maximum packet transmission delay of $j$ time slots.
\end{definition} 

Each UA in our system can belong to one out of a total of $J$ QoS classes, $\mathcal{A}_j, j=1,\cdots,J$ with $\bigcup_{j=1}^J \mathcal{A}_j=\mathcal{A}$, and $\mathcal{A}_j \cap \mathcal{A}_{j'}=\emptyset, j \neq j'$. Due to system resource constraints, not all UAs can be served instantaneously and, thus, a transmission delay will be experienced by the UAs.  In essence, to transmit a data stream of size $b_a$ bits to UA $a \in \mathcal{A}_j$, an average data rate of $b_a/j\tau_2$ during $j$ consecutive time slots is needed, otherwise, the UA experiences an outage due to the excessive delay. 

The scheduling decision $\pi_t$ at a given slot $t$ is a function that outputs two vectors $\boldsymbol{x}_t$ and $\boldsymbol{\tau}_t$ that determine, respectively, the resource allocation over $\mu$W and mmW bands. In fact, $\boldsymbol{x}_t$ includes the variables $x_{akt}\in \{0,1\}$ with $a \in \mathcal{A}, k \in \mathcal{K}_1$ where $x_{akt}=1$ indicates that $\mu$W RB $k$ is allocated to UA $a$ at slot $t$, otherwise, $x_{akt}=0$. In addition, each element $\tau_{at} \in [0,\tau], a \in \mathcal{A}$, of $\boldsymbol{\tau}_t$ determines the allocated time to UA $a$ over mmW band. The required bits for UA $a$ at slot $t$, $b_{a}^{\textrm{req}}(t)$, depend on the number of bits received during previous slots, $\sum_{t'=0}^{t-1}b_{a}^{\textrm{rec}}(t')$. In other words, $b_{a}^{\textrm{req}}(t)=b_a-\sum_{t'=0}^{t-1}b_{a}^{\textrm{rec}}(t')$, with $b_{a}^{\textrm{req}}(1)=b_a$ and $b_a^{\textrm{rec}}(0)=0$.  For a given policy $\pi_{t}$, the required load at time slot $t+1$, $b_{a}^{\textrm{req}}(t+1)$, can be written recursively as
\begin{align}\label{eqRbar}
b_{a}^{\textrm{req}}(t+1)&=b_{a}^{\textrm{req}}(t)-b_a^{\textrm{rec}}(t),\\\notag
&=b_{a}^{\textrm{req}}(t)-\left[\tau\sum\limits_{k=1}^{K_1} R_a(k,t)x_{akt}+ R_a(t)\tau_{at}\zeta_{at}\right].
\end{align}
From \eqref{eqRbar}, we observe that policy $\pi_t$ depends on the scheduling decisions during previous time slots $\{\pi_1, \pi_2, ..., \pi_{t-1}\}$. Thus, we define $\pi=\{\pi_1, \pi_2, ..., \pi_t, ...,\pi_J \} \in \boldsymbol{\Pi}$ as a long-term \textit{scheduling policy}, where $\boldsymbol{\Pi}$ is the set of all possible scheduling policies.

Next, we use (\ref{eqRbar}) to formally define the \emph{QoS criterion} for any UA $a \in \mathcal{A}_j$ as 
\begin{align}\label{lambda2}
\mathds{1}(a\in \mathcal{A}_j;\pi) = \begin{cases}
1             & \text{if}\,\, \sum_{t'=1}^{j}b_a^{\textrm{rec}}(t')\geq b_a,\\
0             & \text{otherwise},
\end{cases}
\end{align}
where $\mathds{1}(a\in \mathcal{A}_j;\pi)=1$ indicates that under policy $\pi$, enough resources are allocated to UA $a \in \mathcal{A}_j$ to receive $b_a$ bits within $j$ slots, while $\mathds{1}(a\in \mathcal{A}_j;\pi)=0$ indicates that UA $a$ is going to experience an outage. We define the outage set $\mathcal{O}^{\pi}=\{a| \mathds{1}(a\in \mathcal{A}_j;\pi)=0, j=1,\cdots,J\}$ as the set of UAs in outage.

Prior to formulating the problem, we must note the following inherent characteristics  of dual-mode scheduling: 1) If mmW link with a high LoS probability is not feasible for a UE, scheduling over the mmW band can cause outage to the associated UAs, specifically for delay-intolerant UAs, 2) larger range of supported rates is available for UAs compared to the conventional single-band systems. Hence, for some UAs, the required rate exceeds the achievable rate at $\mu$W band. Therefore, effective dual-mode scheduling should not only rely solely on CSI, but it must also leverage UA-specific metrics, herein referred to as \textit{context information} as formally defined next. 
\begin{definition}
At any slot $t$, the tuple $\mathcal{C}=(\mathcal{A}_{j\geq t}, \boldsymbol{b}^{\textrm{req}}(t), \boldsymbol{\rho})$ defined as \textit{context information}, is composed of the delay constraints of UAs, $\mathcal{A}_{j\geq t}=\bigcup_{j=t}^J \mathcal{A}_{j}$, the required load per UA, $\boldsymbol{b}^{\textrm{req}}(t)=\{b_{a}^{\textrm{req}}(t)| a \in \mathcal{A}_{j\geq t}\}$, and the LoS probability of each UA, $\boldsymbol{\rho}=\{\rho_{a}| a \in \mathcal{A}_{j\geq t}\}$.
\end{definition}
Note that exploiting the context information at any time slot $t$ properly links the scheduling policy $\pi_t$ to the history, since from \eqref{lambda2}, $\mathcal{C}$ at slot $t$ depends on $\pi_t', t'=1,\cdots,t-1$.   
\subsection{Problem Formulation}
Our goal is to find a scheduling policy $\pi^* \in \boldsymbol{\Pi}$ that satisfies \eqref{lambda2} for as many UAs as possible over $J$ time slots. 
%For the given set of QoS constraints in \eqref{lambda2}, any feasible solution if exists, is the optimal solution. Sub-optimal solutions can be found by relaxing the integer constraints and following standard methods of Linear Programming (LP). However, we consider a scheduling problem with an infeasible set of QoS constraints, i.e., no feasible solution $\pi$ exists to satisfy \eqref{lambda2} for all UAs. 
The general long-term scheduling problem for slots $t=1,\cdots,J$ can be solved separately  at each slot $t$ to find $\pi_t^*(\mathcal{C},\text{CSI})=(\boldsymbol{x}^*_t, \boldsymbol{\tau}^*_t)$, while the time-dependency of scheduling decisions is captured by exploiting the context information. Therefore, the scheduling problem at an arbitrary slot $t$ can be formulated as follows:\vspace{-.5em}
\begin{subequations}
\begin{IEEEeqnarray}{l}
 \argmax_{\boldsymbol{x}_t,\boldsymbol{\tau}_t} \lambda_{t,1}+\mathbb{E}\left[\lambda_{t,2}\right],\label{opt1:a}\\
\text{s.t.}\,\,\,\,\,
\tau\sum\limits_{k=1}^{K_1} R_a(k,t)x_{akt} \geq b_{a}^{\textrm{req}}(t),  \forall a \in \mathcal{A}_t \cap \mathcal{G}_{t,1},\label{opt1:b}\\
R_a(t)\tau_{at}\zeta_{at} \geq b_{a}^{\textrm{req}}(t-1),  \forall a \in \mathcal{A}_t \cap \mathcal{G}_{t,2},\label{opt1:c}\\
 \boldsymbol{x}_t\in \mathcal{X}=\\\notag
 \left\{x_{akt} \in \{0,1\}\Big| \sum_{a \in \mathcal{A}}x_{akt}\leq 1,\sum_{k=1}^{K_1}x_{akt} \leq K_1, \forall a \in \mathcal{A}_{j\geq t}\right\}, \label{opt1:d}\\
 \boldsymbol{\tau}_t\in \mathcal{Y}=\\\notag\left\{\tau_{at} \in \left[0,\tau\right]\Big| \sum_{a \in \mathcal{G}_{t,2}} \tau_{at}+|\mathcal{G}_{t,2}|\tau' \leq \tau, \forall a \in \mathcal{G}_{t,2}\right\}, \label{opt1:e}\\
%&& B_{\pi}(a,T)\leq B_{a},\label{opt1:d}\\
 \boldsymbol{x}_t,\boldsymbol{\tau}_t\in \mathcal{Z}=\left\{\boldsymbol{x}_t\in \mathcal{X},\boldsymbol{\tau}_t\in \mathcal{Y}\Big|\sum_{k=1}^{K_1}x_{akt}\tau_{at}=0\right\},\,\label{opt1:f}
\end{IEEEeqnarray}
\end{subequations}
where $\lambda_{t,1}$ and $\lambda_{t,2}$ denote, respectively, the number of satisfied UAs scheduled at $\mu$W and mmW bands at slot $t$.
Given a decision policy $\pi_t$, $\lambda_{t,2}$ is a random variable that depends on $\zeta_{at}$
%the stochastic wireless channel 
at the mmW band. In fact, the expectation in \ref{opt1:a} is taken over $\zeta_{at}$, for all $a \in \mathcal{G}_{t,2}$. However, $\lambda_{t,1}$ is deterministic, if the slot duration $\tau$ is smaller than the $\mu$W channel coherence time.
\begin{figure}
\centering
\centerline{\includegraphics[width=\columnwidth]{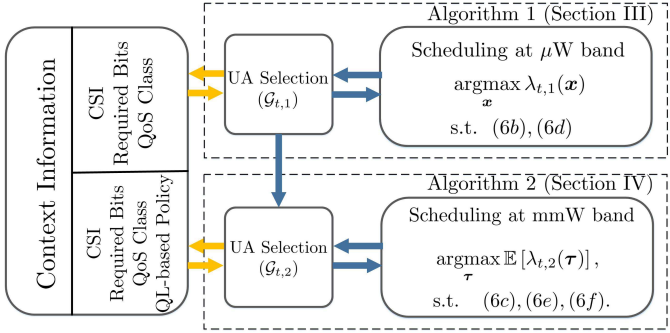}}\vspace{-0.2cm}
\caption{\small The structure of the context-aware scheduler.}\vspace{-1.5em}
\label{organ}
\end{figure}

%We note that the problem \eqref{opt1:a}-\eqref{opt1:f} is a stochastic mixed-integer non-linear programming (MINLP) problem and is NP-hard to solve. %Without constraint \eqref{opt1:d}, however, the objective function and constraints are linear functions of optimization  variables $\boldsymbol{x}$ and $\boldsymbol{\tau}$.
The problem \eqref{opt1:a}-\eqref{opt1:f} falls into a class of optimization problems, referred to as \textit{minimum unsatisfied relations (Min-UR)}, which are known to be NP-hard \cite{Amaldi98}. 
%Formally, the Min-UR problem is defined as follows:
%\begin{definition}
%Consider the following system of inconsistent inequalities 
%\begin{align}\label{min_UR1}
%f_i(x)\geq 0, \,\,\,\,\, i=1,\cdots, F,
%\end{align}
%where $f_i: \mathbb{R}^n \rightarrow \mathbb{R}$ might be linear or nonlinear. The Min-UR problem is to find $x \in \mathbb{R}^n$ that %violates as few relations as possible for \eqref{min_UR1}. That is
%\begin{align}\label{min_UR2}
%\argmin_{x \in \mathbb{R}^n} \,\,\#I(x),
%\end{align}
%where $I(x)=\{i \in \{1,\cdots,F\}| f_i(x)<0\}$ and $\#I(x)$ indicates the inconsistency number.
%\end{definition}
Although linear systems with equality or inequality constraints can be solved in polynomial time, using an adequate linear programming method, least mean squared methods are not appropriate for infeasible systems when the objective is to maximize satisfied relations \cite{Amaldi98}. 

With this in mind, we propose a two-stage solution that solves \eqref{opt1:a}-\eqref{opt1:e} in polynomial time, as illustrated in Fig. \ref{organ}. The scheduling at $\mu$W band is considered first in order to reliably  schedule as many UAs as possible with small required loads over the $\mu$W band. \textcolor{black}{The motivation for serving UAs first at $\mu$W band is due to the fact that transmissions at $\mu$W frequencies are robust against blockage. Unlike $\mu$W frequencies, mmW communication is highly susceptible to blockage and, thus, scheduling  UAs only at the mmW band can potentially cause  outage for delay-intolerant UAs.} To this end, a hierarchy scheme is proposed based on the UAs' QoS class, CSI, and the required loads. Moreover, the UAs selection and scheduling are jointly done at SBS using an iterative algorithm. Then, for the remaining UAs that were not scheduled at the $\mu$W band, we propose a joint UA selection criterion  and scheduling algorithm that introduces a hierarchy to the UAs, based on the context information, and maximizes the number of satisfied UAs.

%We note that the CSI for future slots is not available at BS in advance. Indeed, at any time slot $t$, the scheduler manages resources based on the CSI at slot $t$, transmitted bits up to slot $t$, and the delay constraints of the UAs. Therefore, in addition to the CSI, the scheduling at slot $t$ is also linked to history through a set of \textit{context information} defined as follows:

\vspace{0em}
%%%%%%%%%%%%%%%%%%%%%%%%%%%%%%%%%%%%%%%%%%%%%%%%%%%%
\section{Context-Aware UA Selection and Resource Allocation at $\mu$W band}\label{Sec:III}
Before scheduling at mmW band, the goal of the scheduler is to first find an allocation $\boldsymbol{x}^*_t$ at each slot $t$ over $\mu$W band that satisfies
\begin{subequations}
\begin{IEEEeqnarray}{rCl}
&&\argmax_{\boldsymbol{x}^*_t} \lambda_{t,1}(\boldsymbol{x}_t^*),\label{opt2:a}\\
\text{s.t.}\,\,\,\,\,
&& \eqref{opt1:b}, \eqref{opt1:d}.\label{opt2:b}
\end{IEEEeqnarray}
\end{subequations}
The downlink scheduling problem in \eqref{opt2:a}-\eqref{opt2:b} is an inconsistent combinatorial problem of matching users to resources which does not admit a closed-form solution and has an exponential complexity\cite{4036195}. Hence, the solution of \eqref{opt2:a}-\eqref{opt2:b} depends on which UAs are chosen to be scheduled at $\mu$W band, i.e., the set $\mathcal{G}_{t,1}$. To this end, we introduce a hierarchy for UA selection by grouping the different UAs into the following sets: 
\begin{align}\label{uasel1}
\mathcal{G}^{(1)}_{t,1} &= \{a \in \mathcal{A}_j | j=t, b_a^{\textrm{req}}(t)>0 \},\\
\mathcal{G}^{(2)}_{t,1} &= \{a \in \mathcal{A}_j | j>t, b_a^{\textrm{req}}(t)>0 \}.
\end{align}

In fact, the UAs in $\mathcal{G}^{(1)}_{t,1}$ have higher priority than $\mathcal{G}^{(2)}_{t,1}$, since they must be served during the current time slot, otherwise, they will experience an outage. In addition, for UAs of the same set, the UA that satisfies the following has the highest priority:
\begin{align}\label{uasel2}
a^* = \argmin_{a} \frac{b_a^{\textrm{req}}(t)}{\sum_{k \in \mathcal{K}_1}R_a(k,t)},
\end{align}
where \eqref{uasel2} selects UA $a^*$ that minimizes the ratio of the required load to the achievable rate. To ensure that the constraints set for the selected  UAs $a \in \mathcal{G}_{t,1}$ is feasible, i.e. $\lambda_{t,1}(\boldsymbol{x})=|\mathcal{G}_{t,1}|$, the UA selection has to be done jointly while solving \eqref{opt2:a}-\eqref{opt2:b}. Following, we propose a framework that solves \eqref{opt2:a}-\eqref{opt2:b} for a given $\mathcal{G}_{t,1}$.   
\vspace{-0em}
\subsection{Scheduling as a Matching Game: Preliminaries}
For a selected set of UAs at $\mu$W band, $\mathcal{G}_{t,1}$, we propose a novel resource allocation scheme at $\mu$W band based on \emph{matching theory}, a mathematical framework that provides polynomial time solutions for complex combinatorial problems such as the one in \eqref{opt2:a}-\eqref{opt2:b} \cite{Roth92,eduard11,6853635,7155571}. A matching game is defined as a two-sided assignment problem between two disjoint sets of players in which the players of each set are interested to be matched to the players of the other set, according to \textit{preference relations}. At each time slot $t$ of our scheduling problem, $\mathcal{K}_1$ and $\mathcal{G}_{t,1}$ are the two sets of players. A preference relation $\succ$ is defined as a complete, reflexive, and transitive binary relation between the elements of a given set. Here, we let $\succ_a$ be the preference relation of UA $a$ and denote $k\succ_a k'$, if player $a$ prefers RB $k$ over RB $k'$. Similarly, we use $\succ_k$ to denote the preference relation of RB $k \in \mathcal{K}_1$.

In the proposed scheduling problem, the preference relations of UAs depend on both the rate and the QoS constraint which will be captured via well-designed, individual utility functions for UAs and SBS resources, as defined later in this section. 
%In our scheduling game, the SBS will naturally control the preferences of all resources. 
\vspace{-0em}
\subsection{Scheduling at $\mu$W band as a Matching Game}
Each scheduling decision $\pi_{t,1}$ determines the allocation of RBs to UAs during time slot $t$ over the $\mu$W band. Thus, the scheduling problem at $\mu$W frequency band can be defined as a \textit{one-to-many matching game}:
\begin{definition}
Given two disjoint finite sets of players $\mathcal{G}_{t,1}$ and $\mathcal{K}_1$, the scheduling decision at time slot $t$, $\pi_{t,1}$, can be defined as a \textit{matching relation}, $\pi_{t,1}:\mathcal{G}_{t,1}  \rightarrow \mathcal{K}_1$ that satisfies 1) $\forall a \in \mathcal{G}_{t,1}, \pi_{t,1}(a) \subseteq \mathcal{K}_1$, 2) $\forall k \in \mathcal{K}_1, \pi_{t,1}(k) \in \mathcal{G}_{t,1}$, and 3) $\pi_{t,1}(k)=a$, if and only if  $k \in \pi_{t,1}(a)$.
\end{definition}    
In fact, $\pi_{t,1}(k)=a$ implies that $x_{akt}=1$, otherwise $x_{akt}=0$. Therefore, $\pi_{t,1}$ is indeed the scheduling decision that determines the allocation at $\mu$W band. One can easily see from the above definition that the proposed matching game inherently satisfies the constraint \eqref{opt1:d}. Next, we need to define suitable utility functions to determine the preference profiles of UAs and RBs. Given matching $\pi_{t,1}$, we define the utility of UA $a$ for $k \in \mathcal{K}_1$ at time slot $t$ as:
%{\fontsize{8.95}{20}
\begin{align}\label{utility1}
\Psi_a(k,t;\pi_{t,1})=
&\begin{cases}
0             &\text{if}\,\, \sum\limits_{k' \in \pi_{t,1}(a)}\!\!\!R_a(k',t)\tau\geq b_{a}^{\textrm{req}}(t),\\
R_a(k,t) &\text{otherwise.} 
\end{cases}
\end{align}%}

The utility of $\mu$W RBs $k \in \mathcal{K}_1$ for UA $a \in \mathcal{G}_{t,1}$ is simply the rate
\begin{align}\label{utility2}
\Phi_k(a,t)=R_a(k,t).
%\end{cases}
\end{align}
Using these utilities, the preference relations of UAs and RBs at a given time slot $t$ will be
\begin{IEEEeqnarray}{rCl}\label{prefer}
k \succ_a k' &&\Leftrightarrow \Psi_a(k,t;\pi_{t,1}) \geq \Psi_a(k',t;\pi_{t,1}) \label{prefer1}\\
a \succ_k a' &&\Leftrightarrow \Phi_k(a,t) \geq \Phi_k(a',t) \label{prefer2}, 
\end{IEEEeqnarray}
for $\forall a, a' \in \mathcal{G}_{t,1}$, and $\forall k,k' \in \mathcal{K}_1$. Given this framework, we propose a joint UA selection and matching-based scheduling algorithm that maximizes $\lambda_{t,1}$. 
%We note that \eqref{utility1} and (\ref{prefer1}) depend on the context information, while (\ref{util2}) and (\ref{prefer2}) rely only on the CSI. Thus, the SBS will not need to know the delay tolerance of each UA, making the matching game suitable for distributed implementations.  
\subsection{Proposed Context-aware Scheduling Algorithm at $\mu$W Band}
To solve the proposed game and find a suitable outcome, we use the concept of two-sided \textit{stable matching} between UAs and RBs, defined as follows \cite{Roth92}:

\begin{definition}
A pair $(a,k) \notin \pi_{t,1}$ is said to be a \textit{blocking pair} of the matching $\pi_{t,1}$, if and only if $a \succ_{k} \pi_{t,1}(k)$ and $k \succ_a \pi_{t,1}(a)$.
Matching $\pi_{t,1}$ is \textit{stable}, if there is no blocking pair.
\end{definition}
A stable scheduling decision, $\pi_{t,1}$, ensures fairness for the UAs. That is, if a UA $a$ envies the allocation of another UA $a'$, then $a'$ must be preferred by the RB $\pi_{t,1}(a')$ to $a$, i.e., the envy of UA $a$ is not justified.  
%Under a stable matching $\pi_t$, one can ensure that the scheduler will not reallocate the RBs to other UAs. In fact, stability is a key requirement for distributed scheduling to ensure that UAs will not deviate from the solution that guarantees the QoS.
\begin{algorithm}[!t]
\footnotesize
\caption{Context-Aware UA Selection and Resource Allocation Algorithm at $\mu$W Band}\label{algo:1}
\textbf{Inputs:}\,\,$\mathcal{G}^{(1)}_{t,1}$,\,\,$\mathcal{G}^{(2)}_{t,1}$,\,\,$\boldsymbol{b}^{\textrm{req}}(t)$,\,\,$R_{a}(k,t)$.\\
\textbf{Outputs:}\,\, $\boldsymbol{x}$; $\mathcal{G}_{t,1}$.\\
\textit{Initialize}: $\mathcal{G}_{t,1}=\emptyset$,
\begin{algorithmic}[1]
%\For{$i=1; i \leq 2; i++$}
\State  $\mathcal{G'}_{t,1}=\mathcal{G}^{(1)}_{t,1}$, $\mathcal{K}_a = \mathcal{K}_1, \forall a \in \mathcal{G'}_{t,1}$.
\State Add UA $a^* \in \mathcal{G'}_{t,1}$ with smallest $b_{a}^{\textrm{req}}(t)/\sum_{k \in \mathcal{K}_1}R_a(k,t)$ to $\mathcal{G}_{t,1}$ and remove it from $\mathcal{G'}_{t,1}$.
\State Update the preference ordering of UAs $a \in \mathcal{G}_{t,1}$ and RBs $k \in \mathcal{K}_1$, using \eqref{utility1} and \eqref{utility2}.
\State Using $\succ_a$, a UA $a \in \mathcal{G}_{t,1}$ is tentatively assigned to its most preferred RB in $\mathcal{K}_a$.
\State From the tentative list of UA applicants plus $\pi_{t,1}(k)$ for RB $k$, only the most preferred UA, based on $\succ_k$, is assigned to $k$. Next, $k$ is removed from the applicants' $\mathcal{K}_a$ sets.
\State Each UA $a$ updates $b_{a}^{\textrm{req}}(t)$ and $\succ_a$ based on \ref{prefer1}.
\Repeat \,\,\,Steps $3$ to $6$ \Until{$\mathds{1}_a(\boldsymbol{x})=1$, or $\mathcal{K}_a = \emptyset, \forall a \in \mathcal{G}_{t,1}$.} 
\If{$\exists a\in \mathcal{G}_{t,1},  \mathds{1}_a(\boldsymbol{x})\neq1$}
\State Remove $a^*$ from $\mathcal{G}_{t,1}$ and go to Step $3$.
\EndIf
\If{$\exists k, \sum_{a \in \mathcal{G}^{(1)}_{t,1} \cup \mathcal{G}^{(2)}_{t,1}}x_{akt}=0$}
let $\mathcal{G'}_{t,1} = \mathcal{G}_{t,1}^{(2)}$ and go to Step 2.\EndIf 
%\EndFor
\end{algorithmic}\label{Algorithm1}
\end{algorithm}\setlength{\textfloatsep}{.2\baselineskip}
For conventional matching problems, the popular \emph{deferred acceptance (DA)} algorithm is normally used to find a stable matching \cite{Roth92,eduard11,7105641}. 
However, DA cannot be applied directly to our problem because it assumes that the quota for each UA is fixed. The quota is defined as the maximum number of RBs that a UA can be matched to. In our problem, however, quotas cannot be predetermined, since the number of RBs needed to satisfy the QoS constraint of a UA in \eqref{opt2:b} depends on the channel quality at each RB, as well as the context information. In fact, the adopted utility functions in \eqref{utility1} depend on the current state of the matching. Due to the dependency of UAs' preferences to the state of the matching, i.e. $x_{akt}$ variables, the proposed game can be classified as a \textit{matching game with externalities} \cite{7105641}. For matching games with externalities, DA may not converge to a two-sided stable matching. Therefore, a new matching algorithm must be found to solve the problem.

To this end, we propose a novel context-aware scheduling algorithm shown in Algorithm \ref{algo:1}. Algorithm \ref{algo:1} first allocates the RBs to the UAs in $\mathcal{G}^{(1)}_{t,1}$. At every iteration, each UA $a^*$ given by \eqref{uasel2} is added to the set $\mathcal{G}_{t,1}$ of the matching game. In Steps $4$ to $10$, the algorithm assigns RBs $k \in \mathcal{K}_1$ to UAs $a \in \mathcal{G}_{t,1}$ as follows. Each UA $a \in \mathcal{G}_{t,1}$ is tentatively assigned to its most preferred RB $k \in \mathcal{K}_a$. Next, from the tentative list of candidate UAs as well as current assignment  $\pi_{t,1}(k)$, the scheduler allocates RB $k$ only to the most preferred UA, based on $\succ_k$. The RB $k$ is removed from the set $\mathcal{K}_a$  corresponding to each candidate UA $a \in \mathcal{G}_{t,1}$. Based on the allocated RBs, the UAs update $b_a^{\textrm{req}}(t)$ and $\succ_a$. This process is repeated until the rate constraints for UAs are satisfied, $\mathds{1}_a(\boldsymbol{x})=1$, or $\mathcal{K}_a=\emptyset$ for UAs $a \in \mathcal{G}_{t,1}$. Then, if some of the RBs are left unallocated, the algorithm follows Steps $2$ to $14$ to add UAs from $\mathcal{G}_{t,1}^{(2)}$ to $\mathcal{G}_{t,1}$.
\begin{theorem}
The proposed Algorithm \ref{Algorithm1} is guaranteed to yield a two-sided stable matching between UAs and $\mu$W RBs.
\end{theorem}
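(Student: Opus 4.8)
The plan is to argue by contradiction, adapting the classical deferred-acceptance stability argument to the fact that the UAs' preferences carry an externality through the utility $\Psi_a$ in \eqref{utility1}. Before touching the contradiction, I would record two structural facts that make this externality benign. First, the RB preferences $\succ_k$ in \eqref{prefer2} are \emph{static}: they rank UAs purely by the rate $\Phi_k(a,t)=R_a(k,t)$, which is independent of the current matching. Second, although $\Psi_a$ depends on the state of $\pi_{t,1}$, this dependence enters only through the binary event ``UA $a$ is already QoS-satisfied''; while $a$ is unsatisfied one has $\Psi_a(k,t;\pi_{t,1})=R_a(k,t)$, so the \emph{relative} ordering of RBs induced by $\succ_a$ is frozen (it is the rate ordering) for the entire time $a$ keeps proposing, and the sole effect of the externality is that all of $a$'s utilities collapse to $0$ the instant its constraint $\sum_{k'\in\pi_{t,1}(a)}R_a(k',t)\tau\geq b_a^{\textrm{req}}(t)$ is met. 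This decoupling of the \emph{order} of proposals from the \emph{stopping rule} is the conceptual heart of the argument.

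Next I would establish the monotonicity invariant of the inner matching loop (Steps 3--7): across successive rounds, the $\succ_k$-rank of the UA held by each RB $k$ is non-decreasing. This is immediate from Step 5, in which $k$ retains the $\succ_k$-most-preferred UA among its incumbent $\pi_{t,1}(k)$ and the new applicants, together with the fact that $\succ_k$ is static. Combined with the rule that a rejected or displaced UA has $k$ permanently deleted from its candidate set $\mathcal{K}_a$, this shows that every UA proposes down a fixed preference list and that the loop terminates after finitely many rounds, since each round either satisfies a UA or strictly shrinks some $\mathcal{K}_a$.

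The main step is then the stability of the resulting matching $\pi^*_{t,1}$ over the final selected set $\mathcal{G}_{t,1}$. Suppose, for contradiction, that $(a,k)\notin\pi^*_{t,1}$ is a blocking pair, so that $a\succ_k\pi^*_{t,1}(k)$ and $k\succ_a\pi^*_{t,1}(a)$. I split on whether $a$ is satisfied at termination. If it is, then $\Psi_a(k',t;\pi^*_{t,1})=0$ for every RB $k'$, so $a$ is indifferent among all RBs and cannot strictly prefer $k$ to its bundle; hence $(a,k)$ cannot block, and the externality is precisely what rules this case out. If $a$ is unsatisfied, then by the termination condition of Step 7 it left the loop only when $\mathcal{K}_a=\emptyset$, i.e.\ it proposed to \emph{every} RB, in particular to $k$; since $k\notin\pi^*_{t,1}(a)$, it was rejected or later displaced by $k$, at which round $k$ held some UA $a'$ with $a'\succ_k a$. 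By the monotonicity invariant the final holder satisfies $\pi^*_{t,1}(k)\succeq_k a'\succ_k a$, contradicting $a\succ_k\pi^*_{t,1}(k)$.

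The hardest part to get right is exactly this interplay between the externality and the proposal dynamics: one must verify that a UA never regrets an RB it released, which rests on its preference order being frozen while it is unsatisfied, and that the collapse of its utilities to zero upon satisfaction plays the role of the (dynamically determined) quota that ordinary deferred acceptance lacks here. I would close by noting that the outer selection loop (Steps 2 and 8--13) only inserts or removes whole UAs from $\mathcal{G}_{t,1}$ and re-runs the inner loop; any UA removed for infeasibility is absent from the final $\mathcal{G}_{t,1}$ and so cannot form a blocking pair, whence the stability established above for the final $\mathcal{G}_{t,1}$ is the stability of the output. A minor technical point, resolved by a fixed tie-breaking rule (or by genericity of the fading rates), is the possibility of equal rates $R_a(k,t)=R_{a'}(k,t)$, which would otherwise render $\succ_k$ non-strict.
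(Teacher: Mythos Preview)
Your proposal is correct and follows essentially the same approach as the paper's own proof: a contradiction argument that case-splits on whether the candidate blocking UA $a$ terminated because it became satisfied ($\mathds{1}_a(\boldsymbol{x})=1$) or because it exhausted its applicant set ($\mathcal{K}_a=\emptyset$), using in the latter case that rejection by $k$ implies $\pi_{t,1}(k)\succ_k a$. Your version is somewhat more explicit---you spell out the monotonicity invariant for the RB side, the ``frozen order'' observation for unsatisfied UAs, and the treatment of the outer selection loop and ties---but the skeleton is identical to the paper's.
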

\begin{proof}
%Since at any arbitrary time slot $t$, UAs $a \in \mathcal{G}_{t,1} \subseteq \mathcal{A}_t$ are involved in the matching game and $\mathcal{A}_t \cap \mathcal{A}_t'=\emptyset, t \neq t'$, it suffices to prove the two-sided stability of the matching algorithm at time slot $t$. 
The convergence of the Algorithm \ref{Algorithm1} at each slot is guaranteed, since a UA never applies for a certain RB twice. Hence, at the worst case scenario, all UAs will apply for all RBs once, which yields $\mathcal{K}_{a}=\emptyset, \forall a \in \mathcal{A}$. Next, we show that, once the algorithm converges, the resulting matching between UAs and RBs is two-sided stable. Assume that there exists a pair $(a,k) \notin \pi_{t,1}$ that blocks $\pi_{t,1}$. Since the algorithm has converged, we can conclude that at least one of the following cases is true about $a$: $\mathds{1}_a(\boldsymbol{x})=1$ , or $\mathcal{K}_{a}=\emptyset$.

The first case, $\mathds{1}_a(\boldsymbol{x})=1$ implies that $a$ does not need to add more RBs to $\pi_{t,1}(a)$. In addition, $a$ would not replace any of $k' \in \pi_t(a)$ with $k$, since $k' \succ_{a} k$. Otherwise, $a$ would apply earlier for $k$. If $a$ has applied for $k$ and got rejected, this means $\pi_{t,1}(k) \succ_k a$, which contradicts $(a,k)$ to be a blocking pair. Analogous to the first case, $\mathcal{K}_{a}=\emptyset$ implies that $a$ has got rejected by $k$, which means $\pi_{t,1}(k) \succ_k a$ and $(a,k)$ cannot be a blocking pair. This proves the theorem.
\end{proof}
Given $\mathcal{G}_{t,1}$ by Algorithm \ref{Algorithm1} at $\mu$W band, the scheduling problem at slot $t$ is now reduced to choosing a subset of unscheduled UAs and allocate mmW resources to them such that the number of satisfied UAs is maximized. 
%%%%%%%%%%%%%%%%%%%%%%%%%%%%%%%%%%%%%%%%%%%%%%%%%%%%
\section{Context-Aware UA Selection and Resource Allocation at mmW Band}\label{Sec:IV}
 We let $ \mathcal{G'}_{t,2}\!=\!\{ a \in \mathcal{A}_{j\geq t}| a \notin \mathcal{G}_{t,1}, b_{a}^{\textrm{req}}(t)\!>0\}$ be the set of UAs that have not been scheduled  over the $\mu$W band. Here, the scheduling problem over the mmW band at slot $t$ can be formulated as a stochastic min-UR problem as follows:
\begin{subequations}
\begin{IEEEeqnarray}{rCl}
&&\argmax_{\boldsymbol{\tau}_t} \mathbb{E}\left[\lambda_{t,2}(\boldsymbol{\tau}_t)\right],\label{opt3:a}\\
\text{s.t.}\,\,\,\,\,
&& \eqref{opt1:c}, \eqref{opt1:e}, \eqref{opt1:f}.\label{opt3:b}
\end{IEEEeqnarray}
\end{subequations}

Here, we note that $\zeta_{at}$ in \eqref{opt1:c} is a Bernoulli random variable with success probability $\rho_a$. Hence, for any allocation $\boldsymbol{\tau}_t$, the number of satisfied constraints $\lambda_{t,2}$ is a random variable. Although the exact distribution of $\lambda_{t,2}$ may not be found for a general infeasible problem as \eqref{opt3:a}-\eqref{opt3:b}, we can approximate the distribution of outage ratio at slot $t$, $P_{\textrm{out,t}} = 1-\left[(\lambda_{t,1}+\lambda_{t,2})/A_{j=t}\right]$, as follows: 
\begin{proposition}
Let $\boldsymbol{\tau}_t$ be a feasible solution for the subset of constraints in \eqref{opt3:b} associated with UAs $a \in \mathcal{G}_{t,2} \subseteq \mathcal{G'}_{t,2}$. Given $\lambda_{t,1}$ and $0 \leq P_{th}<1-\frac{\lambda_{t,1}}{A_{j=t}} $, where $P_{th}$ is an outage threshold, the CDF of the outage ratio at slot $t$, $F_{P_{out,t}}(P_{th})$ can be approximated by,
\begin{align}\label{prop1-1}
F_{P_{out,t}}(P_{th}) \approx 1-\frac{\Gamma\left(\lfloor (1-P_{th})A_t-\lambda_{t,1}+1 \rfloor,\lambda_{ave}\right)}{\lfloor (1-P_{th})A_t-\lambda_{t,1} \rfloor !},
\end{align}
%where $\Phi(x)=\frac{1}{\sqrt{2\pi}}\int_{-\infty}^{x}\exp(-t^2/2)dt$. Moreover, $\mu_{\lambda_t}$ and $\sigma^2_{\lambda_t}$ represent, respectively, the mean and the variance of $\lambda_t$. In addition, if resources are allocated to a UAs with LoS probabilities greater than $0.5<\rho_{th}<1$, then
where $\lfloor . \rfloor$ is the floor function, $\Gamma(.,.)$ is the incomplete gamma function, and 
\begin{align}\label{exp2}
\lambda_{ave}=\mathbb{E}\left[\lambda_{t,2}(\boldsymbol{\tau}_t)\right]=\sum_{a \in \mathcal{G}_{t,2}}\rho_{a}.
\end{align}
\end{proposition}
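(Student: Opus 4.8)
\emph{Proof proposal.} The plan is to identify $\lambda_{t,2}$ as a Poisson binomial random variable, approximate its distribution by a Poisson law with the same mean $\lambda_{ave}$, and then express the resulting Poisson tail probability through the incomplete gamma function to recover \eqref{prop1-1}.

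First, I would observe that, for the feasible allocation $\boldsymbol{\tau}_t$, a UA $a \in \mathcal{G}_{t,2}$ meets its constraint \eqref{opt1:c} precisely when its mmW link is LoS, i.e., when $\zeta_{at}=1$, an event of probability $\rho_a$. Treating the variables $\zeta_{at}$ as independent across $a \in \mathcal{G}_{t,2}$---which holds exactly when at most one selected UA per UE is assigned to the mmW band---$\lambda_{t,2}(\boldsymbol{\tau}_t)=\sum_{a \in \mathcal{G}_{t,2}}\zeta_{at}$ is a sum of independent, non-identically distributed Bernoulli variables, i.e., a Poisson binomial random variable, whose mean is $\lambda_{ave}=\sum_{a \in \mathcal{G}_{t,2}}\rho_a$ by linearity of expectation, which establishes \eqref{exp2}.

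Next, I would rewrite the outage event in terms of $\lambda_{t,2}$. Since $P_{out,t}=1-(\lambda_{t,1}+\lambda_{t,2})/A_t$, the inequality $P_{out,t}\leq P_{th}$ is equivalent to $\lambda_{t,2}\geq (1-P_{th})A_t-\lambda_{t,1}$, and, $\lambda_{t,2}$ being integer-valued, to $\lambda_{t,2}\geq \lceil (1-P_{th})A_t-\lambda_{t,1}\rceil$; hence $F_{P_{out,t}}(P_{th})=\Pr(\lambda_{t,2}\geq \lceil (1-P_{th})A_t-\lambda_{t,1}\rceil)$. I would then invoke the Poisson (Le Cam) approximation: the distribution of the Poisson binomial $\lambda_{t,2}$ is within total variation $\sum_{a \in \mathcal{G}_{t,2}}\rho_a^2$ of a Poisson distribution with mean $\lambda_{ave}$, so that $\Pr(\lambda_{t,2}=n)\approx \lambda_{ave}^n e^{-\lambda_{ave}}/n!$. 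Applying the standard identity $\sum_{n=0}^{k}\lambda_{ave}^n e^{-\lambda_{ave}}/n!=\Gamma(k+1,\lambda_{ave})/k!$ with $k=\lceil (1-P_{th})A_t-\lambda_{t,1}\rceil-1=\lfloor (1-P_{th})A_t-\lambda_{t,1}\rfloor$ (valid for non-integer argument) then yields the complementary tail in exactly the form \eqref{prop1-1}.

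The main obstacle is the legitimacy of the Poisson-approximation step, which is the source of the approximate equality: its accuracy is controlled by $\sum_{a}\rho_a^2$ and is therefore good in the regime of small or moderate LoS probabilities but degrades otherwise. The remaining ingredients---the event rewriting and the Poisson-to-incomplete-gamma identity---are elementary, the only subtlety being the floor/ceiling bookkeeping, which matches \eqref{prop1-1} exactly when $(1-P_{th})A_t-\lambda_{t,1}$ is not an integer and differs by at most one step otherwise.
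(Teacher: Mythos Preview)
Your proposal is correct and follows essentially the same route as the paper: write $\lambda_{t,2}=\sum_{a\in\mathcal{G}_{t,2}}\zeta_{at}$ as a Poisson binomial, invoke Le Cam's theorem to approximate it by a Poisson law with mean $\lambda_{ave}$, rewrite $\{P_{out,t}\le P_{th}\}$ as a threshold event on $\lambda_{t,2}$, and evaluate the Poisson tail via the incomplete gamma function. Your treatment is in fact slightly more careful than the paper's, since you make the independence assumption on the $\zeta_{at}$ explicit and flag the floor/ceiling discrepancy at integer values of $(1-P_{th})A_t-\lambda_{t,1}$; the paper simply writes the floor without comment and quotes the Le Cam bound as $2\sum_a\rho_a^2$.
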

\begin{proof}
%The proof follows the central limit theorem, applied to the sum of Bernoulli random variables.
We can write $\lambda_{t,2}$ as the sum of Bernoulli random variables $\zeta_{at}$, i.e., $\lambda_{t,2}(\boldsymbol{\tau})=\sum_{a \in \mathcal{G}_{t,2}}\zeta_{at}$. Hence, using Le Cam's theorem, the distribution of $\lambda_{t,2}$ follows Poisson distribution, i.e.,
\begin{align}\label{exp1}
\mathbb{P}\left[\lambda_{t,2}(\boldsymbol{\tau}_t)=k\right]\approx \frac{\lambda_{ave}^ke^{-\lambda_{ave}}}{k!},
%\mathbb{P}(\lambda_t(\boldsymbol{\tau}) \geq \lambda_{th})\approx 1-\Phi\left(\frac{\lambda_{th}-\mu_{\lambda_t}}{\sigma_{\lambda_t}}\right),
\end{align}
where $\mathbb{E}\left[\lambda_{t,2}(\boldsymbol{\tau}_t)\right]$ is the sum of expected values of $\zeta_{at}$ for selected UAs in $\mathcal{G}_{t,2}$ as given by \eqref{exp2}. Moreover, the approximation error is bounded by
\begin{align}\label{LeCam}
\sum_{k=0}^{\infty}\Big|\mathbb{P}\left[\lambda_t=k\right]-\frac{\lambda_{ave}^ke^{-\lambda_{ave}}}{k!}\Big|<2\sum_{a \in \mathcal{G}_{t,2}}\rho_{a}^2,
\end{align}
where $\lambda_t = \lambda_{t,1}+\lambda_{t,2}$. Next, 
\begin{align}
\!\!\!F_{P_{out,t}}(P_{th})&=\mathbb{P}\left(P_{out,t}\leq P_{th}\right)\notag\\
&=\mathbb{P}\left(1-\frac{\lambda_{t,1}+\lambda_{t,2}}{A_t}\leq P_{th}\right)\\
&= 1- \mathbb{P}\left(\lambda_{t,2}\leq\lfloor(1-P_{th})A_t-\lambda_{t,1}\rfloor\right)\\
&\approx 1\!-\!\frac{\Gamma\left(\lfloor (1-P_{th})A_t-\lambda_{t,1}+1 \rfloor,\lambda_{ave}\right)}{\lfloor (1-P_{th})A_t-\lambda_{t,1} \rfloor !},\label{LeCam2-22}
\end{align}
where \eqref{LeCam2-22} follows the CDF of the Poisson distribution. 
\end{proof}
From \eqref{exp2} and \eqref{LeCam}, we can see that the objective function increases as UAs with higher $\rho_{a}$ are satisfied, however, the approximation of the distribution becomes less accurate.   

%Furthermore, \eqref{exp2} dictates a primary step to determine $\rho_a $ before solving the problem.
We note that if LoS probabilities $\rho_a$ are known by the SBS, the proposed scheduling problem over the mmW band becomes equivalent to a 0-1 stochastic Knapsack optimization problem \cite{Dean05}. However, in practice, the explicit values of $\rho_a$ may not be available at the SBS. In Section \ref{sec:QL}, we will introduce a learning approach using which the SBS can determine if $\rho_a \geq \rho_{th}$, where $\rho_{th}$ is a constant value. By learning which UAs satisfy $\rho_a \geq \rho_{th}$, the SBS assigns priority to the UAs that are more likely to be at a LoS link from the SBS. This information along with the QoS classes of UAs will allow the scheduler to group UAs into the following non-overlapping subsets:
\begin{eqnarray}
\mathcal{G}^{(1)}_{t,2} &= \{a \in \mathcal{A}_{j=t} \cap \mathcal{G'}_{t,2}| \rho_a \geq \rho_{th} \},\\\label{QL1_a}
\mathcal{G}^{(2)}_{t,2} &= \{a \in \mathcal{A}_{j=t} \cap \mathcal{G'}_{t,2}| \rho_a < \rho_{th}   \},\\\label{QL1_b}
\mathcal{G}^{(3)}_{t,2} &= \{a \in \mathcal{A}_{j>t}\cap \mathcal{G'}_{t,2}|\rho_a \geq \rho_{th}\},\\\label{QL1_c}
\mathcal{G}^{(4)}_{t,2} &= \{a \in \mathcal{A}_{j>t}\cap \mathcal{G'}_{t,2}|\rho_a < \rho_{th}\}.\label{QL1_d}
\end{eqnarray}

In fact, the SBS will adopt a greedy approach that assigns priority to sets $\mathcal{G}^{(i)}_{t,2}$ with $i=1$ as highest and $i=4$ as lowest priority. That is due to the fact that UAs in $\mathcal{G}^{(1)}_{t,2}$ cannot tolerate further delays. Moreover, they belong to UEs with high possibility of LoS access to SBS. In addition, UAs in $\mathcal{G}^{(2)}_{t,2}$ are in second priority, since they cannot tolerate more delay, while having a low $\rho_a$. Moreover, UAs $\mathcal{G}^{(3)}_{t,2}$ are assigned to a third priority, since they can tolerate more delays and have high probability to be at LoS mmW link with SBS. The least priority is assigned to UAs in $\mathcal{G}^{(4)}_{t,2}$ as they can tolerate further delays, while having low $\rho_a$. 

Furthermore, for the UAs of the same set, the highest priority is given to a UA $a^*$ that satisfies: \vspace{0em}
\begin{align}\label{uasel3}
a^* = \argmin_{a} \frac{b_a^{\textrm{req}}(t)}{R_a(t)}.
\end{align}
In other words, the SBS selects the UA that requires the least time resource to be satisfied. Similar to $\mu$W band scheduling, the SBS must ensure that the constraints set for selected  UAs $a \in \mathcal{G}_{t,2}$ is feasible. Therefore, the UA selection has to be done jointly while solving \eqref{opt3:a}-\eqref{opt3:b}. Next, we propose a joint UA selection and scheduling algorithm at mmW band.
%%%%%%%%%%%%%%%%%%%%%%%%%%%%%%%%%%%%%%%%%%%%%%%%%%%% 
\subsection{Proposed Context-aware Scheduling Algorithm over the mmW Band}
\begin{algorithm}[!t]
\footnotesize
\caption{Context-Aware UA Selection and Resource Allocation Algorithm at mmW Band}\label{euclid}
\textbf{Inputs:}\,\,$\mathcal{G}^{(i)}_{t,2},i=1,...,4$,\,\,$\boldsymbol{b}^{\textrm{req}}(t)$,\,\,$R_{a}(t)$.\\
\textbf{Output:}\,\,$\boldsymbol{\tau};\mathcal{G}_{t,2}$.
\begin{algorithmic}[1]
\State \textit{Initialize:} $\mathcal{G}_{t,2}=\emptyset$.
\For{$i=1; i \leq 4; i++$}
\For{$j=1:|\mathcal{G}^{(i)}_{t,2}|$}
\State Find UA $a^* \in \mathcal{G}^{(i)}_{t,2}$ from \eqref{uasel3}, set $\tau_{a^*,t} = b_{a^*}^{\textrm{req}}(t)/R_{a^*}(t)$ and add $a^*$ to $\mathcal{G}_{t,2}$.
\If{(\ref{opt1:e}) is not satisfied} 
\State Remove $a^*$ from $\mathcal{G}_{t,2}$. Break. 
\EndIf
\EndFor
\EndFor 
\end{algorithmic}\label{Algorithm2}
\end{algorithm}\setlength{\textfloatsep}{.2\baselineskip}
Over the mmW band, the objective is to serve as many UAs as possible in order to offload more traffic from the $\mu$W band, subject to UAs delay constraints. With this in mind, we propose Algorithm \ref{Algorithm2} to solve \eqref{opt3:a}-\eqref{opt3:b}. The algorithm follows the priority criterion introduced in \eqref{QL1_a}-\eqref{QL1_d}. Starting with the set $\mathcal{G}^{(1)}_{t,2}$, the scheduling process is a 0-1 Knapsack problem composed of $|\mathcal{G}^{(1)}_{t,2}|$ items all with the same benefit $\rho_{th}$ and weights equal to the required time $\tau_{a,t}=\frac{b_a^{\textrm{req}}(t)}{R_a(t)}$. This problem can be simply solved by sorting the required time in increasing order and adding UAs one by one to the set $\mathcal{G}_{t,2}$. The algorithm follows the process for the remaining sets and converges, once the entire mmW time slot duration is allocated and no additional time is available for more UAs. \textcolor{black}{From Algorithms \ref{algo:1} and \ref{Algorithm2}, we observe that resource allocation at any time slot affects the scheduling at both mmW and $\mu$W bands for the subsequent time slots.  Therefore, the proposed UA selection and scheduling schemes at one frequency band are not independent of those at the other frequency band, thus requiring \emph{joint scheduling} for the dual-mode system.}

The above solution requires the SBS to determine for which UAs the condition $\rho_a \geq \rho_{th}$ is satisfied. Next, we introduce a learning scheme that enables the UEs to obtain this information by monitoring successful LoS transmissions from the SBS over time and send it to the SBS. Clearly, $\rho_a$ is the same for the UAs that run at an arbitrary UE, since they experience the same wireless channel. \vspace{0em}
%%%%%%%%%%%%%%%%%%%%%%%%%%%%%%%%%%%%%%%%%%%%%%%%%%%%%%%%%%%%%%%%%%%%%%%%%%%%%%%%%%%%%%%%%%%%%%%%%%%%%%%%%%%%%%%%%%%%%%%%%
\subsection{Q-learning Model to Evaluate the LoS Probability}\label{sec:QL}
In a real-world cellular network, the UEs will be surrounded by many objects and, thus, the SBS may never know in advance whether an LoS mmW link will be available or not. Therefore, scheduling UAs of UEs that are experiencing a high blockage probability not only wastes network resources, it may drastically degrade QoS for delay intolerant UAs.

In practice, $\rho_a$ depends on many parameters such as the distance between the UE and the SBS, or blockage by human or other surrounding objects. Although finding a closed-form relation of $\rho_a$ with these parameters may not be feasible in general, the UEs can learn whether they have a high LoS probability based on transmissions from the SBS over time. The UEs will then update and send this information to the SBS at each time slot. Clearly, a simple averaging over time would not work, since the environment is dynamic and $\rho_a$ may change over time. To this end, we propose a learning framework, based on Q-learning (QL) \cite{Sutton}, in order to determine UAs with $\rho_a \geq \rho_{th}$ without knowing the actual $\rho_{a}$ values. QL is a reinforcement learning algorithm that determines optimal policy without detailed modeling of the system environment \cite{Sutton,6503987}.
The proposed QL model is formally defined by the following key elements:
\begin{itemize}
\item \textit{Agents:} UEs $m \in \mathcal{M}$.
\item \textit{States:} Depending on whether a UA of a given UE is being scheduled over $\mu$W or mmW bands, there are three possible states for the UA: 1) UA is served by the SBS over a LoS mmW link ($S_1$), 2) UA is scheduled over mmW band, but no LoS link is possible ($S_2$), and 3) UA is scheduled over $\mu$W band ($S_3$).  
\item \textit{Action:} At any state, a UE can make a decision $d$ chosen from a set $\mathcal{D}=\{d_1,d_2\}$ where $d_1$ and $d_2$, respectively, stand for whether to schedule this user's UAs at the current frequency band or switch to the other frequency band.
\item \textit{State transition probability:} $T(S_i,d,S_j)$ denotes the probability  of transition from state $S_i$ to $S_j$ if decision $d \in \mathcal{D}$ is chosen by the UE. Hence, $T(S_i,d_2,S_3)=1$ for $i=1,2$, and $T(S_3,d_2,S_1)=1-T(S_3,d_2,S_2)=\rho_a$. In addition, $T(S_3,d_1,S_3)=1$ and $T(S_i,d_1,S_1)=1-T(S_i,d_1,S_2)=\rho_a$ for $i=1,2$. 
\item \textit{Reward:} The UE receives rewards $\boldsymbol{r}=\left[r_1,-r_2,r_3\right]$, respectively, for each of its UAs being at states $S_1$, $S_2$, and $S_3$, where $r_2>r_1>r_3>0$. The rewards are assumed the same for all UAs $a \in \mathcal{A}$. The reward values affect both the convergence and the policy. For instance, for large negative rewards, i.e., $r_2 \gg r_3$, the optimal policy for the UA is to choose $\mu$W, even for large $\rho_a$ values. The long-term reward for choosing mmW band by UA $a \in \mathcal{A}$ is $r_1 \rho_a -(1-\rho_a)r_2$. Therefore, we can set $\boldsymbol{r}$ such that only for $\rho_a \geq \rho_{th}$, mmW band be preferred by UA $a$. That is, $r_1 \rho_a -(1-\rho_a)r_2 \geq r_3$ which implies
\begin{align}\label{QL_r}
\rho_{th}=\frac{r_3+r_2}{r_1+r_2}, \,\,\, r_2>r_1>r_3>0.
\end{align}
\end{itemize} 
\begin{figure}
\centering
\centerline{\includegraphics[width=\columnwidth]{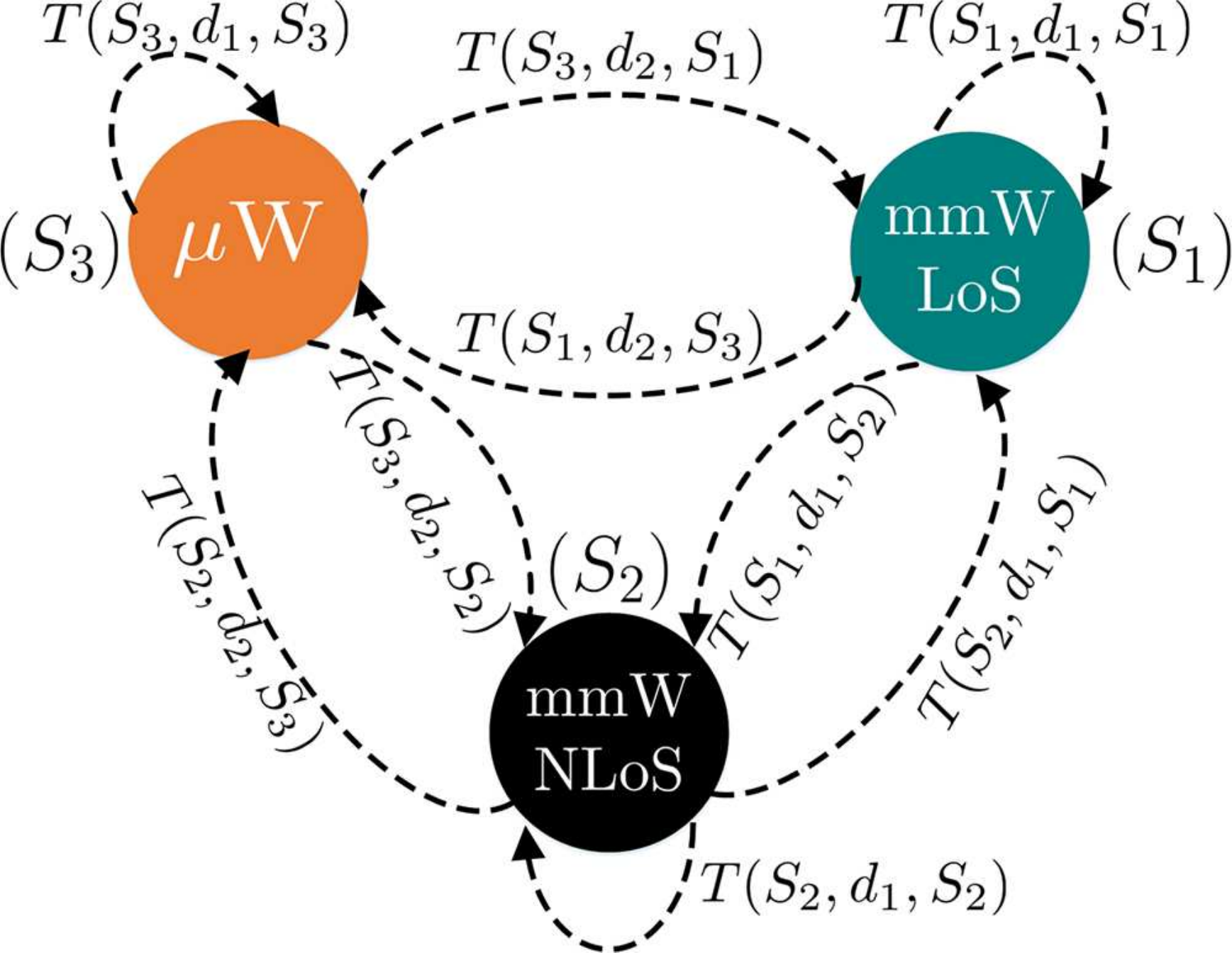}}\vspace{-0.1cm}
\caption{\small QL model with state transition probabilities.}\vspace{-0cm}
\label{sim5}
\end{figure}
%Given this model, each UE explores one of the three states for each of its UAs that is being scheduled over $\mu$W or mmW frequency bands at each time slot and achieves a reward associated with that residing state. 
At any time slot, each UA that is selected for scheduling will explore one of the three states. Consequently, this UA's corresponding UE will achieve a reward associated with the current residing state. We note that the UEs do not have any prior knowledge about the transition probabilities in advance. However, QL provides a model-free approach that instead of estimating $\rho_a$, it allows UE to find the best decision while residing at each state. This is done by the notion of Q-values $Q(S,d)$ which represents the value of decision $d$ while being at state $S$. Starting from an initial Q-values, UA can find true values via an iterative process as follows:
\begin{align}\label{QL1}
Q(S,d)\leftarrow (1-\alpha)Q(S,d)+\alpha \left[\boldsymbol{r}(S')+\gamma \max_{d'} Q(S',d') \right],
\end{align}
where $\alpha$ and $\gamma$ are predetermined constants. It can be shown that updating the Q-table based on (\ref{QL1}) maximizes the long-term expected reward: $\bar{r}=\lim\limits_{T \rightarrow \infty}\frac{1}{T} \sum_{t=1}^{T} \boldsymbol{r}(S(t))$ \cite{Sutton}. 
Moreover, given the converged $Q$ values, the following sufficient condition can be used to find a subset of UAs with $\rho_{a} \geq \rho_{th}$:
\begin{align}\label{QL2}
&Q(S_i,d_2)\leq Q(S_i,d_1), i=1,2 \,\,\,\,\text{and},\notag\\
&Q(S_3,d_1)\leq Q(S_3,d_2) \Rightarrow \rho_a \geq \rho_{th} .
\end{align}

We note that if there is only one UE that is running only one UA, the criterion given by (\ref{QL1}) leads to making optimal decisions in terms of maximizing the expected reward. 
However, the multi-user resource allocation cannot be done only based on $\rho_a \geq \rho_{th}$ criterion. On the one hand, assigning mmW resources only to UAs with high $\rho_a$ and small required load will result low spectral efficiency. Moreover, UAs with small $\rho_a$ and large $b_{a}^{\textrm{req}}(t)$ will not meet their delay requirement, if they are scheduled over the $\mu$W frequency band. However, even with small $\rho_a$, it is still probable for these UAs to be served over a LoS mmW link. Therefore, multi-user scheduling enforces SBS to exploit per UA context information, i.e., required load per UA, delay constraint, as well as UEs-SBS channel diversity. \textcolor{black}{Here, it worth noting that exploiting side information such as the geographical location information of buildings could also facilitate learning the LoS probabilities \cite{6840343,7430349}.} 
%The solution of the RAT selection problem plays a key role in a dual-mode system. On the one hand, assigning UEs with low $\rho_m$ to mmW band degrades the performance, if some users be at NLoS link and their corresponding UA(s) cannot tolerate further delays. 
%In addition, increasing $|\mathcal{G}_{t,1}|$ increases the beam training overhead, reducing the effective time for data transmission. That is, any solution $\boldsymbol{\tau}$ must satisfy: 
%\begin{align}
%0 \leq |\mathcal{G}_{t,1}| &\leq \frac{\tau}{\tau'}\left[1-\frac{1}{\tau}\left(\sum_{a \in %\mathcal{G}_{t,1}}\frac{b_{a}^{\textrm{req}}(t-1)}{R_a(t)}\right)\right] \label{QL3_1}.
%\end{align}
\subsection{Complexity Analysis of the Proposed Two-stage Solution}
\begin{figure*}
\centering
\begin{subfigure}[b]{0.3\textwidth}
\includegraphics[width=\textwidth]{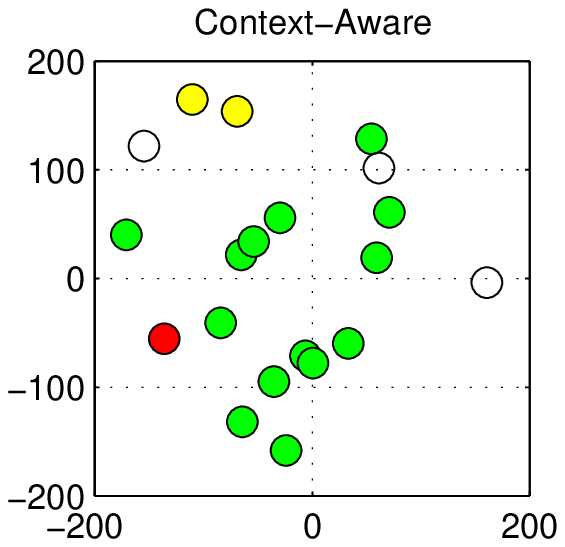}\vspace{-0.5cm}
\caption{Proposed approach}
\label{fig:context-aware}
\end{subfigure}
~\hspace{-.4cm} %add desired spacing between images, e. g. ~, \quad, \qquad, \hfill etc. 
%(or a blank line to force the subfigure onto a new line)
\begin{subfigure}[b]{0.3\textwidth}
\includegraphics[width=\textwidth]{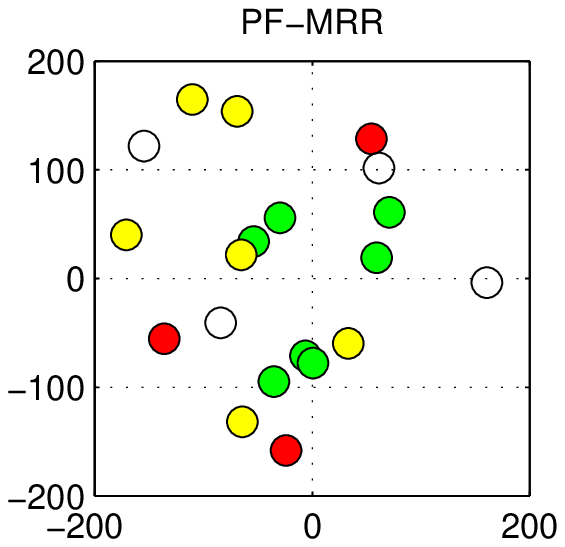}\vspace{-0.5cm}
\caption{PF-MRR}
\label{fig:PF-MRR}
\end{subfigure}
~\hspace{-.4cm} %add desired spacing between images, e. g. ~, \quad, \qquad, \hfill etc. 
%(or a blank line to force the subfigure onto a new line)
\begin{subfigure}[b]{0.3\textwidth}
\includegraphics[width=\textwidth]{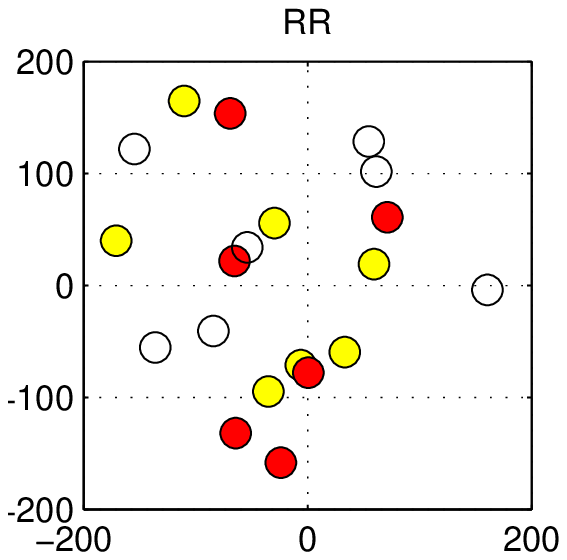}\vspace{-0.5cm}
\caption{RR}
\label{fig:RR}
\end{subfigure}\vspace{-.5em}
\caption{\small \textcolor{black}{Performance comparison between scheduling approaches for $M=20$ and $b_a=5$ Mbits. For the cell edge UEs, $\rho_a$ is sampled randomly from $\left[0,1\right]$ and for others $\rho_a=1$.}}\label{satisfymap-2}
\end{figure*}

With this in mind, we can make the following observation with regard to the proposed two-stage solution in Algorithms \ref{Algorithm1} and \ref{Algorithm2} for the original problem in \eqref{opt1:a}-\eqref{opt1:f}.
\begin{theorem}
The proposed long-term scheduling algorithm composed of Algorithm \ref{Algorithm1} and Algorithm \ref{Algorithm2} solves the problem \eqref{opt1:a}-\eqref{opt1:f} in polynomial time with respect to the number of UAs.
%Given the context information of UAs $a \in \mathcal{G'}_{t,1}$, Algorithm \ref{Algorithm2} provides the optimal solution, in terms of maximizing $\mathbb{E}\left[\lambda_{t,1}|\boldsymbol{\rho}\right]$.
\end{theorem}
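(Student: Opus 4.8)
The plan is to bound the running time of the two algorithms separately, show each is polynomial in the number of UAs $A$, and then observe that the long-term scheduler merely invokes them over the fixed number $J$ of QoS classes, which preserves polynomiality. First I would analyze the matching subroutine inside Algorithm~\ref{Algorithm1} (Steps~3--7). The key structural fact, already used in the proof of Theorem~1 to establish convergence, is that a UA never applies to the same RB twice: once an RB rejects a UA, it is removed from $\mathcal{K}_a$. Hence, over an entire execution of the inner loop, the total number of (UA,\,RB) application events is bounded by $|\mathcal{G}_{t,1}|\,K_1 \leq A K_1$. Each event requires the applying UA to pick its most-preferred available RB and the targeted RB to compare the new applicant against its current holder via \eqref{prefer1}--\eqref{prefer2}, costing $O(\log K_1)$ after an initial $O(A K_1 \log K_1)$ sort of the preference lists. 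Thus one full run of the matching loop is $O(A K_1)$ up to logarithmic factors.

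The main obstacle is bounding the outer UA-selection loop, because of the conditional restarts: Step~9 removes the last-added UA and returns to Step~3, while Step~11 refills $\mathcal{G}'_{t,1}$ from $\mathcal{G}^{(2)}_{t,1}$ and returns to Step~2. To rule out super-polynomial cycling I would exhibit a monotone potential. The reservoir $\mathcal{G}'_{t,1}$ only shrinks: every execution of Step~2 permanently extracts one UA, and a UA removed in Step~9 is never re-inserted. Consequently the number of times Step~2 is reached is at most $|\mathcal{G}^{(1)}_{t,1}|+|\mathcal{G}^{(2)}_{t,1}| \leq A$, and between consecutive extractions at most one matching run and one feasibility check of \eqref{opt1:b} (via \eqref{lambda2}), each $O(A K_1)$, are performed. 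Multiplying, Algorithm~\ref{Algorithm1} terminates in $O(A^2 K_1)$ time, polynomial in $A$.

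Algorithm~\ref{Algorithm2} is then straightforward: for each of the four priority classes $\mathcal{G}^{(i)}_{t,2}$ it sorts the candidate UAs by the required time $b_a^{\textrm{req}}(t)/R_a(t)$ of \eqref{uasel3}, an $O(A\log A)$ operation, and performs at most $|\mathcal{G}^{(i)}_{t,2}|$ greedy insertions, each with an $O(1)$ feasibility test against \eqref{opt1:e}. Since the four classes partition at most $A$ UAs, Algorithm~\ref{Algorithm2} runs in $O(A\log A)$ time.

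Finally I would combine the two stages: at each slot $t$ the scheduler executes Algorithm~\ref{Algorithm1} followed by Algorithm~\ref{Algorithm2}, for a per-slot cost of $O(A^2 K_1)$. The long-term policy repeats this for $t=1,\dots,J$, and since $J$ does not grow with $A$, the overall complexity is $O(J A^2 K_1)$, which is polynomial in the number of UAs and establishes the claim. As noted, the only delicate point is closing off the conditional \emph{go-to} statements of Algorithm~\ref{Algorithm1}; the shrinking-reservoir argument together with the no-reapplication property of the matching loop is what guarantees termination in polynomially many iterations.
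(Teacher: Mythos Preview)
Your argument is correct and follows the same two-stage decomposition as the paper: bound Algorithm~\ref{Algorithm1} and Algorithm~\ref{Algorithm2} separately at each slot, then aggregate over the $J$ slots. The paper's own proof is terser and obtains a sharper count. For Algorithm~\ref{Algorithm1} it simply observes that at slot $t$ at most $A_{j\geq t}=|\mathcal{A}_{j\geq t}|$ matchings are computed, each costing $O(K_1)$ (since in the worst case every UA is re-allocated over all $K_1$ RBs), giving $O(K_1 A_{j\geq t})$ per slot and $O\!\left(K_1\sum_{j=1}^{J}jA_j\right)$ over the horizon; for Algorithm~\ref{Algorithm2} the same count yields $O\!\left(\sum_{j=1}^{J}jA_j\right)$. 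The combined bound is $O\!\left((K_1+1)\sum_{j=1}^{J}jA_j\right)$, tighter than your $O(JA^2K_1)$. Conversely, you treat the conditional restarts in Steps~9 and~11--12 explicitly through the shrinking-reservoir potential, something the paper leaves implicit; this makes your termination argument for the outer UA-selection loop more transparent, at the price of the looser per-slot factor. Both routes establish polynomiality in $A$, so the difference is one of bookkeeping rather than strategy.
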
\vspace{-1em}
\begin{proof}
First, we analyze the complexity of Algorithm \ref{Algorithm1}. For each slot $t$, let $A_{j\geq t}=|\mathcal{A}_{j\geq t}|$ be the number of UAs that can be selected to be scheduled at $\mu$W band. At most, the algorithm must find the solution for $A_{j\geq t}$ number of matchings. In addition, each matching has the complexity of $O(K_1)$, since in the worst case, each UA must be re-allocated to $K_1$ RBs by SBS. Hence, the complexity of Algorithm \ref{Algorithm1} at each slot $t$ is $O(K_1A_{j\geq t})$ and the total complexity from slot $t=1$ to $t=J$ is $O(K_1\sum_{j=1}^{J}jA_j)$.

Next, we analyze the complexity of Algorithm \ref{Algorithm2}. At each slot $t$, there are at most $A_{j\geq t}$ UAs to be scheduled at mmW ($\mathcal{G}_{t,1}=\emptyset$). Therefore, the Algorithm \ref{Algorithm2} must converge after $A_{j\geq t}$ resource allocations, where each allocation is a special case of the 0-1 Knapsack problem. Hence, the total complexity of the Algorithm \ref{Algorithm2} from slot $t=1$ to slot $t=J$ is $O(\sum_{j=1}^{J}jA_j)$.

From the above results, the overall complexity of the proposed long-term scheduling is $O\left((K_1+1)\sum_{j=1}^{J}jA_j\right)$.
\end{proof}
	\begin{table}[!t]
	%\scriptsize
	\centering
	\caption{%\mycaption{%\vspace*{-1em}
		\vspace*{-0.1cm}Simulation parameters}\vspace*{-0.1cm}
	\begin{tabular}{|c|c|c|}
	\hline
	\bf{Notation} & \bf{Parameter} & \bf{Value} \\
	\hline
	$P_1,P_2$ & Transmit power & $30$ dBm\\
	\hline
	$(\Omega_1,\Omega_2)$ & Available Bandwidth & ($10$ MHz, $1$ GHz)\\
	\hline
	$\omega$ & Bandwidth per RB& $180$ KHz\\
	\hline
	\textcolor{black}{K-factor} & \textcolor{black}{Rician K-factor}& \textcolor{black}{$2.4$ \cite{7481410}}\\
	\hline
	($\xi_1, \xi_2$) & Standard deviation of mmW path loss& ($10, 5.2$) \cite{Ghosh14} \\
	\hline
	($\alpha_1, \alpha_2$) & Path loss exponent& ($3,2$) \cite{Ghosh14}\\
	\hline
	($\beta_1, \beta_2$) & Path loss at $1$ m& ($38, 70$) dB \\
	\hline
	$\psi$ & Antenna gain& $18$ dBi \\
	\hline
	$\tau$ & Time slot duration& $10$ ms \\
	\hline
	$\tau'$ & Beam-training overhead& $0.1$ ms \\
	\hline
	$N_0$ & Noise power spectral density& $-174$ dBm/Hz \\
	\hline
	$J$ & Number of QoS classes& $5$ \cite{Tang4543084}  \\
	\hline
	$\kappa$ & Number of UAs per UE& $3$  \\
	\hline
	$\boldsymbol{r}$ & Reward vector& $[3,-16,1]$  \\
	\hline
	\end{tabular}\label{tabsim}\vspace{.5em}
	\end{table}
	\vspace{-0em}
\section{Simulation Results}\label{Sec:V}
For simulations, we consider an area with diameter $d=200$ meters with the SBS located at the center \cite{Ghosh14}. UEs are distributed uniformly within this area with a minimum distance of $5$ meters from the SBS. Each UE has $\kappa$ UAs chosen randomly and uniformly from $J$ QoS classes. The main parameters are summarized in Table \ref{tabsim}. All statistical results are averaged over a large number of independent runs.
%\begin{comment}
	%\end{comment}
We compare the performance of the proposed context-aware algorithm with two well-known resource allocation approaches:
\begin{itemize}
\item Proportional Fair Scheduler with minimum rate requirement (PF-MRR): The PF scheduling for multi-carrier systems with minimum rate requirement is different than the conventional approach. In \cite{1683565}, a simple approach is proposed to implement PF-MRR which we modify to apply to the dual-mode system. At $\mu$W band, RB $k$ is assigned to the UA $a \in \mathcal{A}_t$ that satisfies 
\begin{align}
a = \argmax_{a \in \mathcal{A}_t}\frac{R_a(k,t)}{\bar{R}_{a}^{\textrm{rec}}(t)+R_{a}^{\textrm{req}}(t)},
\end{align}
where $\bar{R}_{a}^{\textrm{rec}}(t)$ is the achieved average rate up to time slot $t$, and $R_{a}^{\textrm{req}}(t)=b_{a}^{\textrm{req}}(t)/\tau$ is the required average rate at slot $t$ to meet the QoS constraint of UA $a$.
UAs $a \in \mathcal{A}_{t'\geq t}$ with unsatisfied rate requirement are scheduled at mmW band where $\tau_{a,t}=\frac{b_{a}^{\textrm{req}}(t)}{R_a(t)}$ is allocated to the UA $a=\argmax_{a} \frac{R_a(t)}{\bar{R}_{a}^{\textrm{rec}}(t)+R_{a}^{\textrm{req}}(t)}$, while $\sum_{a}\tau_{a,t}=\tau-|\mathcal{G}_{t,2}|\tau'$.
\item Round Robin Scheduler (RR): At $\mu$W band, the scheduler allocates equal number of RBs to each $a \in \mathcal{A}_t$. Unsatisfied UAs $a \in \mathcal{A}_{t'\geq t}$ are scheduled at mmW band with $\tau_{a,t}=\frac{\tau-|\mathcal{G}_{t,2}|\tau'}{|\mathcal{G}_{t,2}|}$.
\end{itemize}
%with a ``CSI-based'' scheduler that relies only on the channel quality and disregards context information. The CSI-based scheduler assumes that all applications per UE can tolerate the same delay equal to the minimum of all corresponding UAs' delays. 
\vspace{-0cm}
\subsection{Quality-of-Experience of the Users}

%\begin{figure}
%\centering
%\centerline{\includegraphics[width=12cm]{satisfymap2.eps}}\vspace{-1cm}
%\caption{Performance comparison between scheduling approaches for $M=20$ and $b_a=1$ Mbits. For the cell edge UEs, $\rho_a$ is sampled randomly from $\left[0,1\right]$ and for others $\rho_a=1$.}\vspace{-.5cm}
%\label{satisfymap-2}
%\end{figure}

Fig. \ref{satisfymap-2} shows a snapshot of a given network realization in which specific UEs are represented by circles. Each UE is associated with $\kappa=3$ UAs, each having a required load of $b_a=5$ Mbits. \textcolor{black}{We note that for an arbitrary UA $a \in \mathcal{A}_j$, the required load $b_a$ (bits) can be translated into data rate $b_a/(j\tau)$. For example, $b_a=5$ Mbits for $a \in \mathcal{A}_5$ is equivalent to $100$ Mbits/s data rate.} The results from this figure show each user's satisfaction by indicating how many UAs per UE are satisfied. In Fig. \ref{satisfymap-2}, the colors red, yellow, and  green are used, respectively, to indicate one, two, and three satisfied UAs per UE. Moreover, circles with no color represent UEs with no serviced UA. \textcolor{black}{Clearly, in Fig. \ref{satisfymap-2}, we can see that the proposed approach significantly improves the overall system performance by providing service to more UEs, compared to both PF-MRR and RR schemes. In addition, we observe that the proposed context-aware approach outperforms PF-MRR and RR schemes by satisfying the QoS needs of more applications,  which naturally leads to a higher quality-of-experience per user.}

\vspace{-0cm}
\subsection{Outage Probability vs Number of UEs}
The overall outage probability, $P_{\textrm{out}}$, is defined as the ratio of the number of QoS violations over the total number of UAs which will be given by:
\begin{align}\label{lambda}
P_{\textrm{out}}(\pi)&= 1-\frac{1}{A}\left(\sum_{t=1}^{J} \lambda_{t,1}(\pi)-\sum_{t=1}^{J} \lambda_{t,2}(\pi)\right)\\\nonumber
&=1-\frac{1}{A}\sum_{t=1}^{J}\sum_{a \in \mathcal{A}_t}\mathds{1}(a ;\pi)=1-\frac{1}{A}|\mathcal{O}^{\pi}|,
\end{align}
Since $P_{\textrm{out}}$ is a random variable, we will study whether the proposed scheduling policy $\pi^* \in \boldsymbol{\Pi}$ guarantees
%\begin{align}\label{qos}
$\mathbb{P}(P_{\textrm{out}}(\pi^*) \geq P_{th}) \leq \epsilon,$
%\end{align}
where $P_{th}$ is the maximum tolerable outage probability and $\epsilon$ is a pre-defined threshold. This can be written as $F_{P_{\textrm{out}}}(P_{th}) \geq 1-\epsilon$, where $F_{P_{\textrm{out}}}(.)$ is the cumulative distribution function (CDF) of $P_{\textrm{out}}$.

\begin{figure}
\centering
\centerline{\includegraphics[width=9cm]{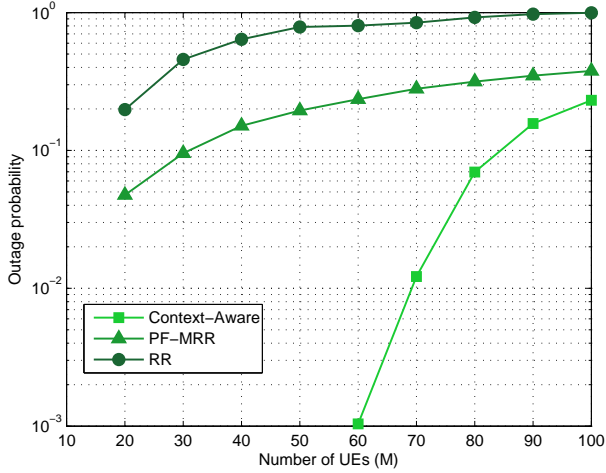}}\vspace{-0.2cm}
\caption{\small \textcolor{black}{Performance comparison between scheduling approaches versus the number of UEs, $M$. All users are at LoS. $b_a=1$ Mbits and $\rho_a=1$ for all $a \in \mathcal{A}$.}}\vspace{-0cm}
\label{fig3.4}
\end{figure}

Fig. \ref{fig3.4} shows the outage probability as the number of UEs varies, for the three considered approaches. Fig. \ref{fig3.4} shows that the outage probability increases as the number of UAs increases. In fact, the results show the number of UAs that can be satisfied for a given outage threshold. Clearly, the proposed algorithm outperforms the PF-MRR and RR scheduling approaches. \textcolor{black}{For example, for a $0.01$ outage probability, the proposed context-aware approach satisfies up to $210$ UAs, considering $\kappa=3$ UAs per UE. However, the baseline approaches fail to achieve this performance. In fact, the outage probability is always greater than $0.04$ for both the RR and the PF-MMR approaches over all network sizes. Finally, from Fig. \ref{fig3.4}, we can clearly see that  the proposed approach can always guarantee the QoS for up to $180$ UAs on average, which is three times greater than the number of satisfied UAs resulting from the PF-MRR and RR approaches.}

\vspace{-0em}
\subsection{Impact of Q-Learning}
\begin{figure}
\centering
\centerline{\includegraphics[width=9cm]{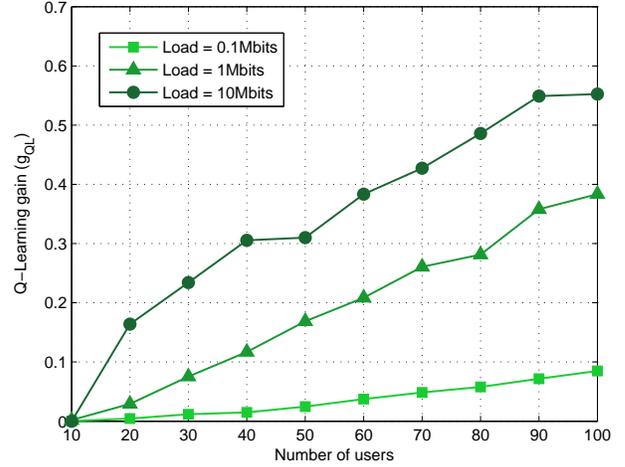}}\vspace{-0.2cm}
\caption{\small Gain of Q-Learning vs number of users for different load values. }\vspace{-0cm}
\label{fig3.6}
\end{figure}

Fig. \ref{fig3.6} shows the gain of the proposed QL approach. The QL gain is defined as the respective number of satisfied UAs with and without QL. The results presented in Fig. \ref{fig3.6} show that more gain is achievable as the number of UAs increases. This stems from the fact that, as the number of UAs increases, it becomes more probable that more number of UEs be at a LoS connection with the BS. Fig. \ref{fig3.6} shows that the QL-based information allows scheduling UAs with higher LoS probabilities. More interestingly, Fig. \ref{fig3.6} shows that the gain increases as the required load per UA increases. This is due to the fact that with more strict QoS constraints, it is become more important to allocate mmW resources only to the UAs with higher probability of LoS.

\begin{figure}
\centering
\centerline{\includegraphics[width=9cm]{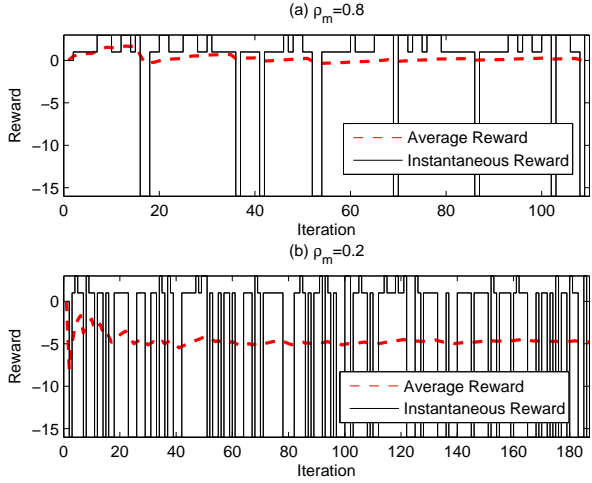}}\vspace{-0.2cm}
\caption{\small State exploration and convergence of Q-Learning }\vspace{-0.2cm}
\label{fig3.7}
\end{figure}

\begin{figure}
\centering
\centerline{\includegraphics[width=9cm]{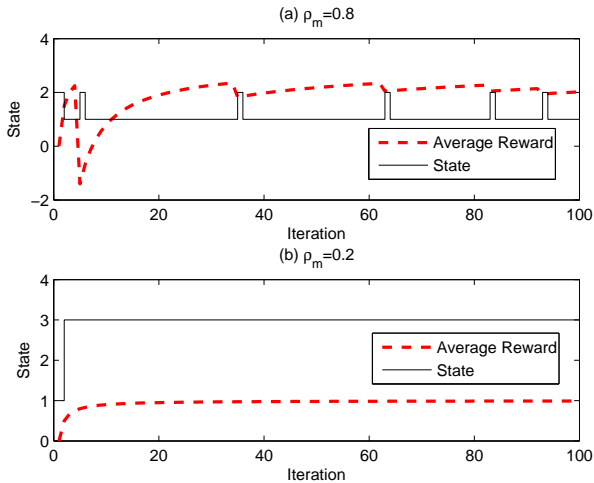}}\vspace{-0.2cm}
\caption{\small State transition and average reward resulted by the optimal policy.}\vspace{-0cm}
\label{fig3.8}
\end{figure}

Fig. \ref{fig3.7}.a and Fig. \ref{fig3.7}.b show both random state exploration and the resulting long-term rewards. The LoS probabilities $\rho_a=0.8$ and $\rho_a=0.2$ are considered, respectively, in Fig. \ref{fig3.7}.a and Fig. \ref{fig3.7}.b. The results in Fig. \ref{fig3.7}.a show that up to $110$ iterations is needed for the QL algorithm to converge, for $\rho_a=0.8$. However, Fig. \ref{fig3.7}.b shows that the algorithm will converge within less than $190$ iterations for $\rho_a=0.2$. Moreover, the average reward is higher in Fig. \ref{fig3.7}.a, since the UAs with $\rho_a=0.8$ are often served over mmW LoS links, while in Fig. \ref{fig3.7}.b, mmW links are frequently blocked. \textcolor{black}{Real-life field measurements have shown that the blockage duration can be very long, exceeding several hundreds of milliseconds \cite{1374945}. This long duration will allow the proposed QL algorithm to converge, before the blockage environment changes.}

In Fig. \ref{fig3.8}.a and Fig. \ref{fig3.8}.b, the average reward and state transitions are shown when the optimal QL policy is followed, respectively, for $\rho_a=0.8$ and $\rho_a=0.2$. Clearly, when LoS probability is high, the optimal policy is to schedule the UA over mmW band, as shown in Fig. \ref{fig3.8}.a. In addition, compared to Fig. \ref{fig3.7}, we can see that the QL policy will substantially increase the average reward compared to the random frequency band selection. For example, for $\rho_a=0.2$, the average reward is increased from $-5$ in Fig. \ref{fig3.7}.b to $1$ in Fig. \ref{fig3.8}.b.
\vspace{-0cm}
\subsection{Outage Probability vs the Required Load}
\begin{figure}
\centering
\centerline{\includegraphics[width=9cm]{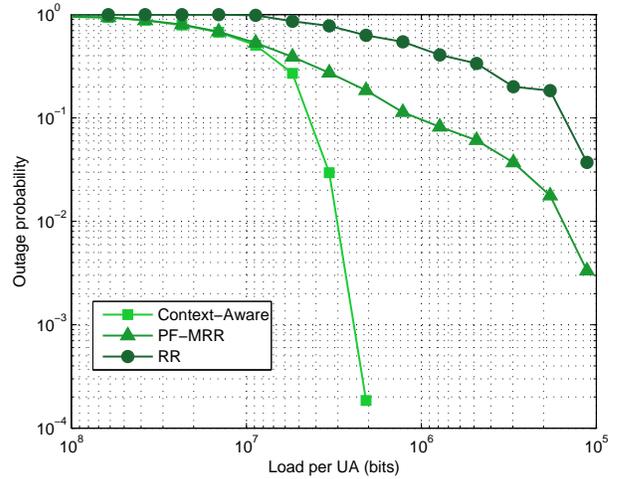}}\vspace{-0.2cm}
\caption{\small \textcolor{black}{Performance comparison between scheduling approaches versus the required bit per UA $b_a$. Parameters $M=30$ and $\rho_a=1$ for all $a \in \mathcal{A}$ are used.}}\vspace{-0cm}
\label{OutagevsBits-1}
\end{figure}

\begin{figure}
\centering
\centerline{\includegraphics[width=9cm]{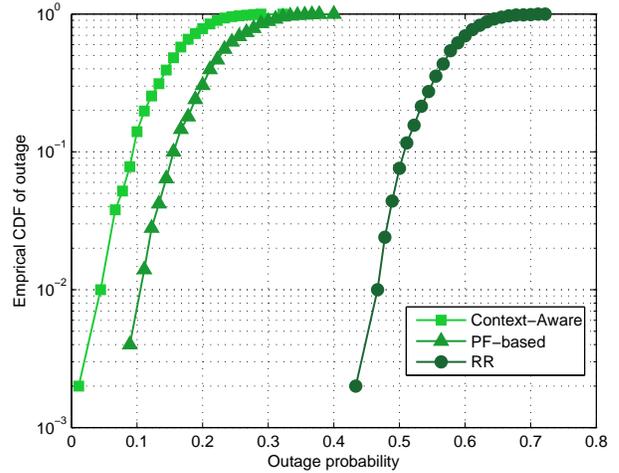}}\vspace{-0.2cm}
\caption{\small \textcolor{black}{The CDF of the outage probability for $M=30$ UAs and $b_a=1$ Mbits. For the cell edge UEs, $\rho_a$ is sampled randomly from $\left[0,1\right]$ and for others $\rho_a=1$.}}\vspace{-0cm}
\label{CDFofOutage-1}
\end{figure}

In Fig. \ref{OutagevsBits-1}, we show the outage probability as the required load per UA varies,  for the three scheduling approaches. In this figure, we can see that the outage probability decreases as the required load per UA decreases. In addition, from Fig. \ref{OutagevsBits-1}, we can see that the proposed context-aware approach yields significant gains, compared to the PF-MRR and RR schemes. \textcolor{black}{In fact, the proposed approach guarantees the required loads up to $2$ Mbits per UA, for $0.01$ outage probability. However, the baseline PF-MRR and RR approaches can guarantee, respectively, less than $0.2$ and $0.1$ Mbits load per UA for the same outage probability.}

\begin{figure*}
\centering
\begin{subfigure}[b]{0.35\textwidth}
\includegraphics[width=\textwidth]{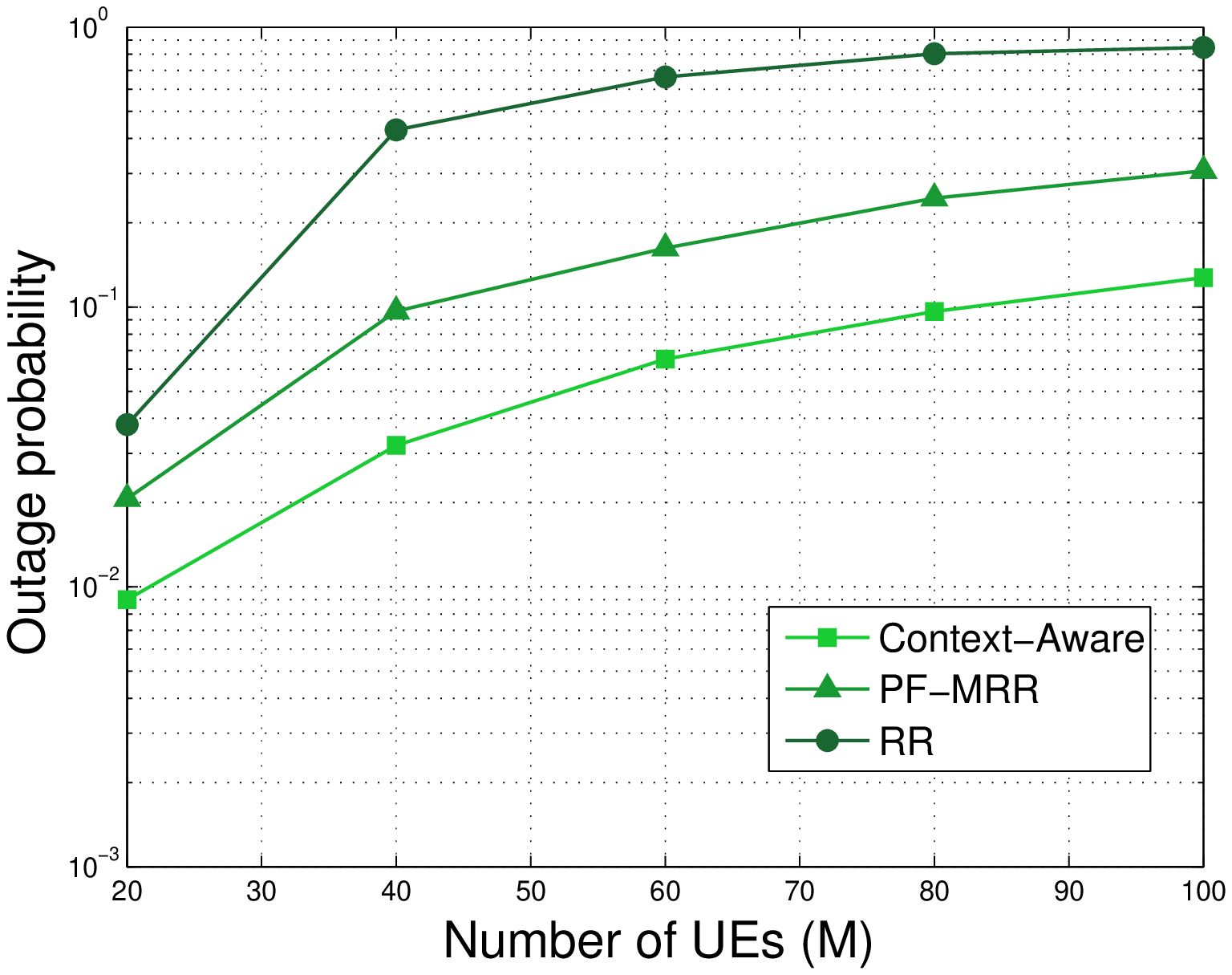}
\caption{Dual-mode}
\label{fig2:a}
\end{subfigure}
~\hspace{-.85cm} %add desired spacing between images, e. g. ~, \quad, \qquad, \hfill etc. 
%(or a blank line to force the subfigure onto a new line)
\begin{subfigure}[b]{0.35\textwidth}
\includegraphics[width=\textwidth]{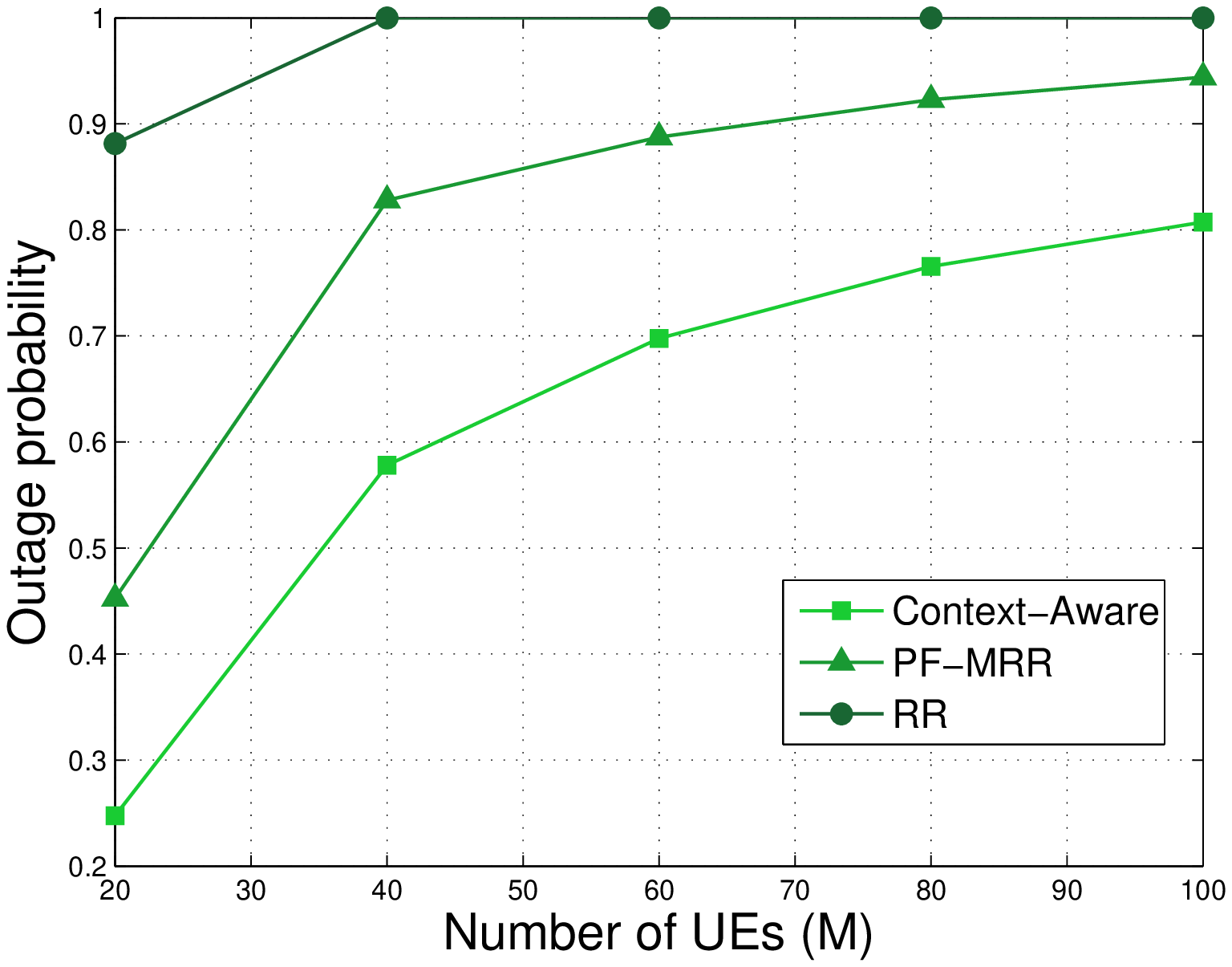}
\caption{Only $\mu$W}
\label{fig2:b}
\end{subfigure}
~\hspace{-.85cm} %add desired spacing between images, e. g. ~, \quad, \qquad, \hfill etc. 
%(or a blank line to force the subfigure onto a new line)
\begin{subfigure}[b]{0.35\textwidth}
\includegraphics[width=\textwidth]{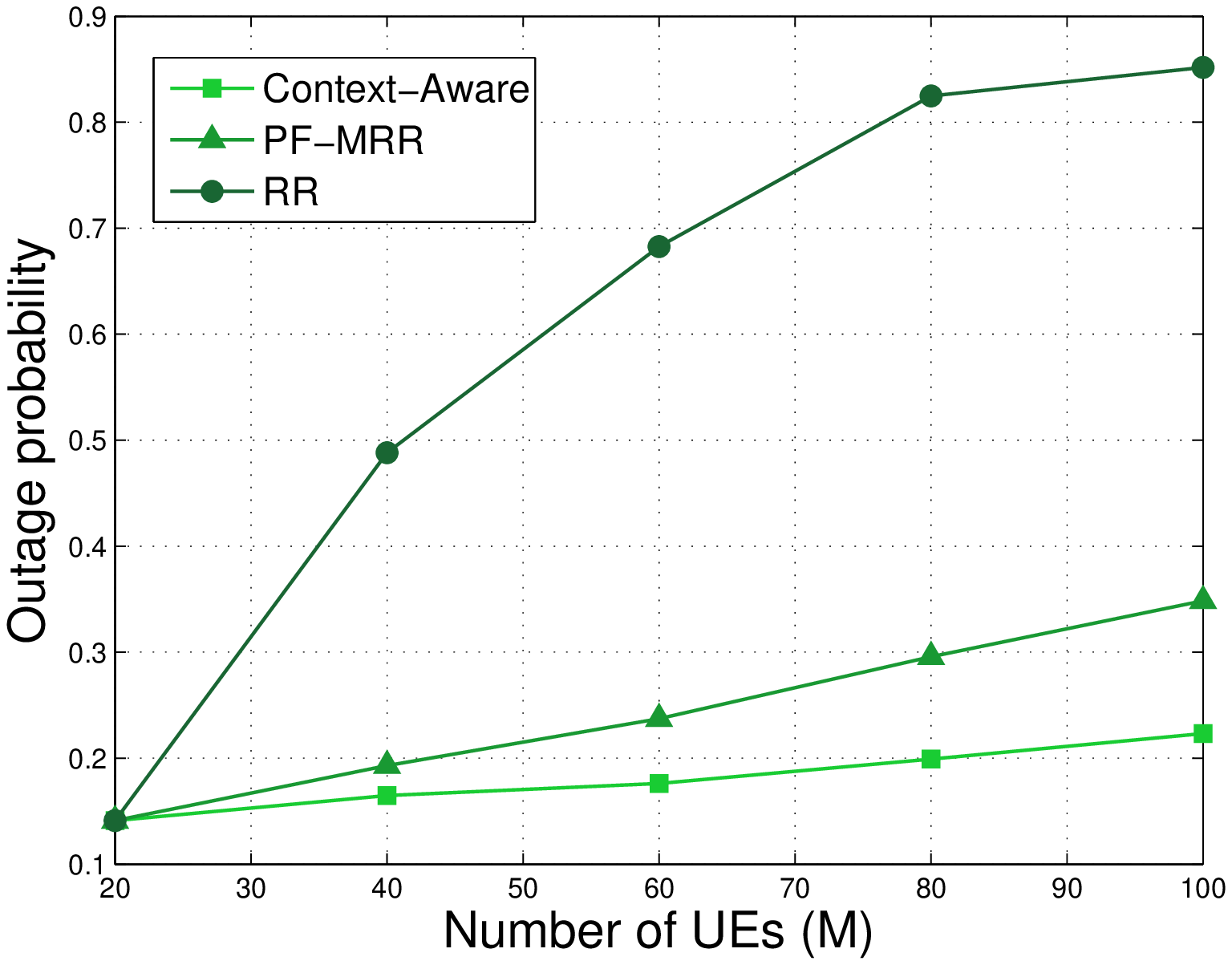}
\caption{Only mmW}
\label{fig2:c}
\end{subfigure}\vspace{-.5em}
\caption{\small \textcolor{black}{Performance comparison between scheduling approaches versus the number of UEs, $M$ for $b_a=0.1$ Mbits. $\rho_a$ is sampled randomly from $\left[0,1\right]$ for the half of UAs.}}\label{Fig2:comment4}\vspace{-1em}
\end{figure*}

%%%%%%%%%%%%%%%%%%%%%%

\subsection{Statistics of the Outage Probability}

The empirical CDF of the outage probability is shown in Fig. \ref{CDFofOutage-1} for $M=30$ UAs with $b_a=1$ Mbits required load. From Fig. \ref{CDFofOutage-1}, we can see that the proposed context-aware approach substantially improves the statistics of the outage, compared with PF-MRR and RR approaches. \textcolor{black}{For example, the probability that $P_{\text{out}}$ be less than $0.2$ is only $30 \%$ for PF-MRR approach, while this value is $80 \%$ for the proposed approach.}

\vspace{-.2cm}

\subsection{\textcolor{black}{Dual-Mode  vs Single-Mode Scheduling}}

\textcolor{black}{Fig. \ref{Fig2:comment4} shows the performance of the scheduling algorithms for three scenarios: a) with dual-mode communication in presence of both mmW and $\mu$W frequency resources, b) with only $\mu$W band being available, and c) with only mmW band being available\footnote{\textcolor{black}{We note that, in our model, the $\mu$W mode does not employ advanced techniques, such as multi-antenna schemes (e.g., beamforming) or carrier aggregation to achieve higher data rates. Performance evaluation of such advanced $\mu$W systems (e.g. LTE-Advanced) can be considered in future work.}}. The results in Fig. \ref{Fig2:comment4} show the key impact of the proposed dual-mode communication on maximizing QoS, compared with single-mode scenarios. In fact, Fig. \ref{fig2:b} shows that, without mmW communications, the outage probability is significantly high across all network sizes. This is due to the fact that the requested traffic load by UEs falls beyond the available capacity of the network over $\mu$W band. Moreover, Fig. \ref{fig2:c} shows that even for small network sizes, e.g. $M=20$ UEs, the outage probability is greater than $10 \%$ which is significantly high for practical cellular networks. That is because the blockage is likely to happen for the subset of UAs with small $\rho_a$ values. Therefore, to address high traffic loads on the one hand, and guarantee high QoS on the other hand, joint usage of mmW-$\mu$W resources is imperative. Indeed, Fig. \ref{fig2:a} shows that the proposed dual-mode scheduling scheme will yield outage probabilities as low as $1 \%$, while managing very large network sizes up to $300$ UAs, with  a reasonably small outage probability.}
\begin{figure}
\centering
\centerline{\includegraphics[width=9cm]{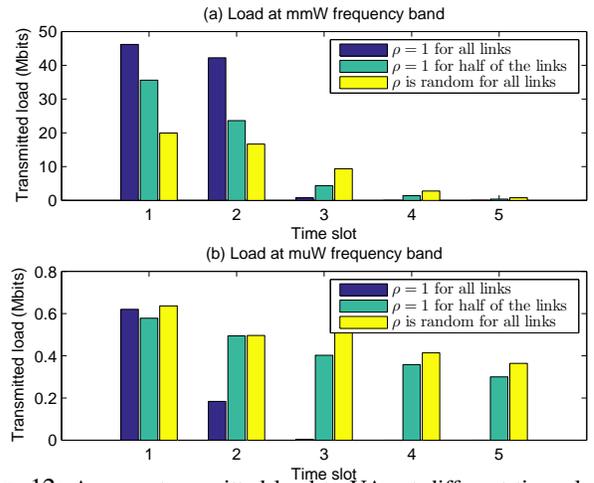}}\vspace{-0.5cm}
\caption{\small \textcolor{black}{Average transmitted load to UAs at different time slots for $A=90$ UAs and $b_a=1$ Mbits.}}
\label{fig3}
\end{figure}

\textcolor{black}{The average transmitted loads to UAs over mmW and $\mu$W frequency bands are shown, respectively, in Figs. \ref{fig3}a and \ref{fig3}b. In fact, Fig. \ref{fig3} shows the average load per radio access technology (RAT) at each time slot. We can observe that the transmitted traffic over the mmW RAT is significantly larger than the $\mu$W RAT. That is clearly due to the larger available bandwidth at the mmW band. Moreover, the transmitted load is lower at last time slots, since by that time, most of the UAs would have already received their requested traffic. In fact, available bandwidth at the mmW band will allow to serve the LoS UAs prior to their due time slot. Clearly, as the link state becomes random for a higher number of UAs, more mmW links will be blocked and, thus, the traffic over the mmW band decreases. Given the results in Figs. \ref{Fig2:comment4} and \ref{fig3}, it is interesting to observe the critical role of exploiting $\mu$W resources, despite the significantly larger traffic at the mmW band. In fact, the joint exploitation of mmW-$\mu$W resources allows to leverage mmW resources for the UAs that are less likely to experience outage, which ultimately decreases traffic at the $\mu$W band in subsequent time slots.}\vspace{-0.2cm}
\subsection{Effect of Beam Training Overhead}
\begin{figure}
\centering
\centerline{\includegraphics[width=9cm]{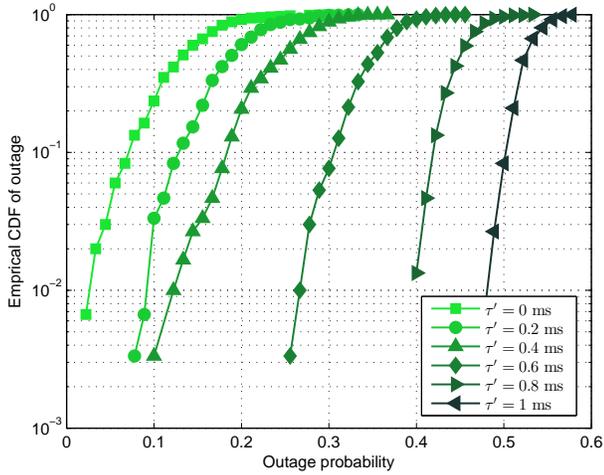}}\vspace{-0.2cm}
\caption{\small \textcolor{black}{The CDF of the outage probability for $M=30$ UAs and $b_a=1$ Mbits. For the cell edge UEs, $\rho_a$ is sampled randomly from $\left[0,1\right]$ and for others $\rho_a=1$.}}\vspace{-0cm}
\label{BF-overhead-1}
\end{figure}

In Fig. \ref{BF-overhead-1}, the effect of the beam training overhead on the outage probability is shown. Here, we observe that $\tau'$ will significantly affect the performance. From Fig. \ref{BF-overhead-1}, we can clearly see that as $\tau'$ increases, the remaining time for data transmissions to UAs decreases which results in a higher outage probability. \textcolor{black}{Fig. \ref{BF-overhead-1} shows that, in the absence of beam training overhead, the outage probability is always less than $0.35$. However, for $\tau'=0.8$ ms, the outage probability will always be less than $0.55$. }
\vspace{-0cm}
\subsection{Number of Iterations}
\begin{figure}
\centering
\centerline{\includegraphics[width=9cm]{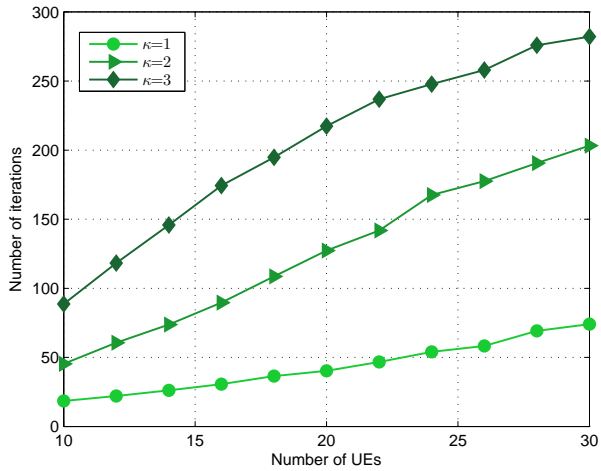}}\vspace{-0.2cm}
\caption{\small The number of iterations versus the number of UEs for $b_a=0.1$ Mbits and $\kappa = 1,2,3$. For the cell edge UEs, $\rho_a$ is sampled randomly from $\left[0,1\right]$ and for others $\rho_a=1$.}\vspace{-0cm}
\label{iteration-1}
\end{figure}

Fig. \ref{iteration-1} shows number of iterations resulting from the proposed scheduling approach as the number of UEs varies for different number of UAs per UE. Clearly, the number of iterations increases almost linearly with the number of UEs. From this figure, we can see that even for large network size up to $30$ UEs and $60$ UAs, the proposed framework is relatively fast, as it converges within $205$ number of iterations.
\vspace{-0cm}
\section{Conclusions}\label{Sec:VI}
In this paper, we have proposed a novel context-aware scheduling framework for dual-mode small base stations operating at mmW and $\mu$W frequency bands. To this end, we have developed a two-stage UA selection and scheduling framework that takes into account various network and UA specific context information to make scheduling decisions. Over the $\mu$W band, we have formulated the context-aware scheduling problem as a one-to-many matching game. To solve this game, we have proposed a novel algorithm for joint UA selection and resource allocation and we have shown that it yields a two-sided stable matching between $\mu$W resources and UAs. Next, we have proposed a joint UA selection and scheduling to allocate mmW resources to the unscheduled UAs. The scheduling problem over mmW band is formulated as a 0-1 Knapsack problem and solved using a suitable algorithm. Moreover, we have proved that the proposed two-stage dual-mode scheduling framework can solve the problem in a polynomial time. Simulation results have shown the various merits and performance advantages of the proposed context-aware scheduling compared to the PF-MRR and RR approaches. 
\vspace{-0.2cm}
% -------------------------------------------------------------------------
%\def\baselinestretch{1.42}
\bibliographystyle{IEEEtran}
\bibliography{references}
%%%%%%%%%%%%%%%%%%%%%%%%%
\begin{IEEEbiography}[{\includegraphics[width=1in,height=1.25in,clip,keepaspectratio]{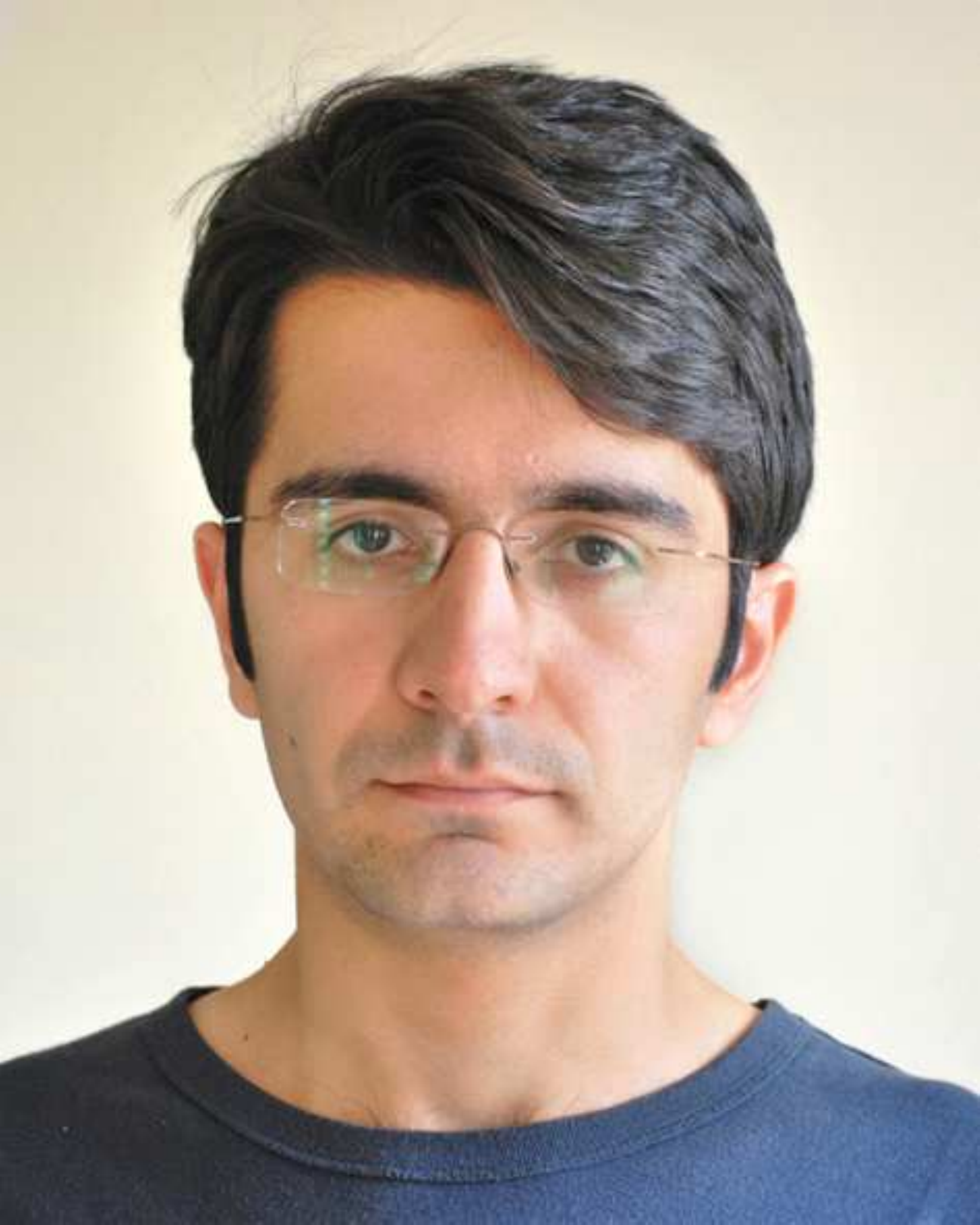}}]{Omid Semiari}(S'14)  received the B.Sc. and M.Sc. degrees in communication systems from University of Tehran in 2010 and 2012, respectively. He is currently a PhD candidate at the Bradly department of Electrical and Computer Engineering at Virginia Tech. In 2014, he has worked as an intern
at Bell Labs, on anticipatory, context-aware resource management. In 2016, he has joined Qualcomm CDMA Technologies (QCT) for a summer internship, working on LTE-Advanced modem design. Mr. Semiari is the recipient of
several research fellowship awards, including DAAD (German Academic Exchange Service) scholarship and NSF
student travel grant. He has actively served as a reviewer for 
flagship IEEE Transactions and conferences and participated as the technical program committee (TPC) member for a variety of workshops at IEEE conferences, such as ICC and GLOBECOM. His research interests
include wireless communications and networking, millimeter wave communications, context-aware
resource allocation, matching theory, and signal processing.\vspace{-1cm}
\end{IEEEbiography}

\begin{IEEEbiography}[{\includegraphics[width=1in,height=1.25in,clip,keepaspectratio]{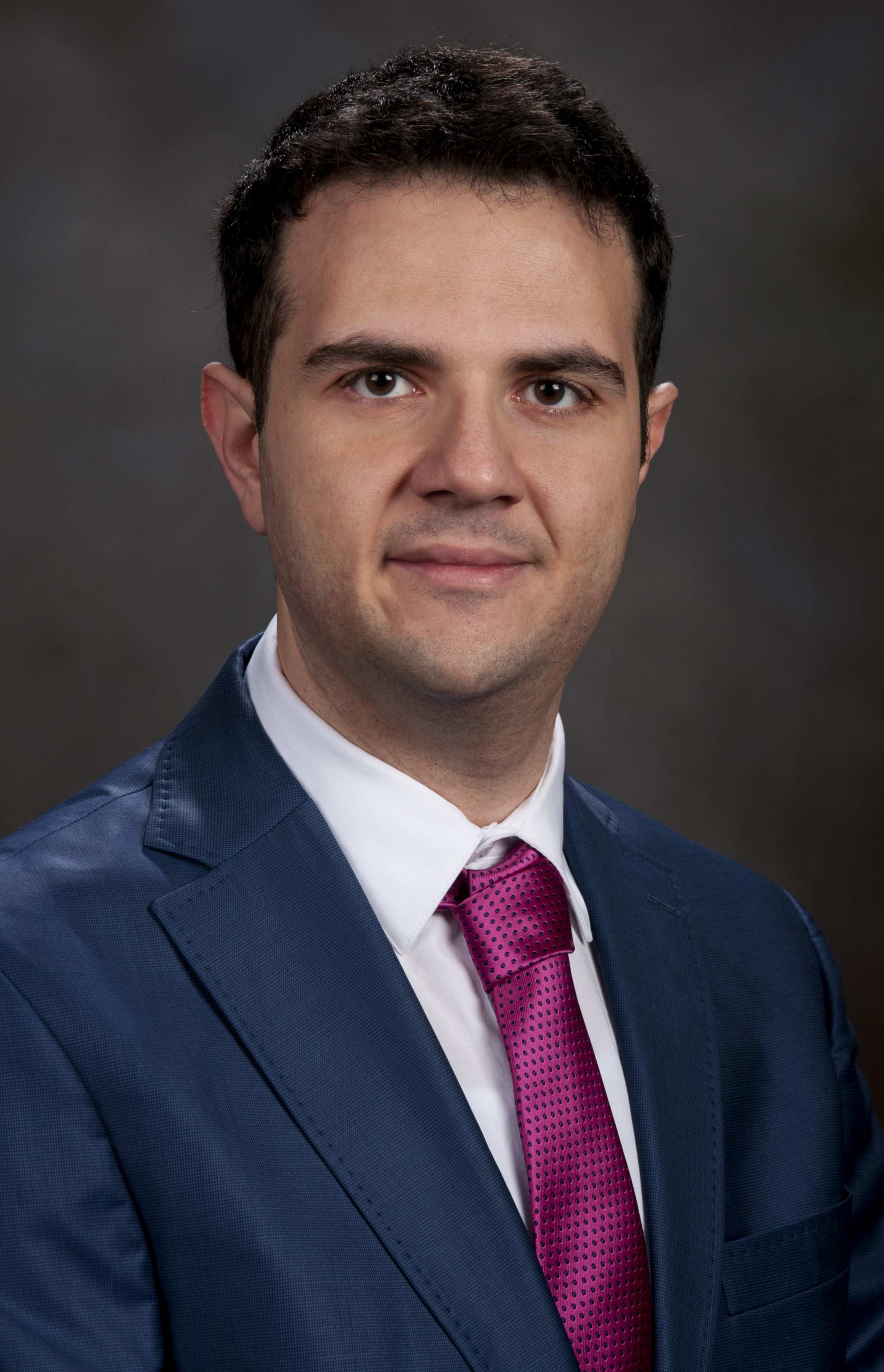}}]{Walid Saad}(S'07, M'10, SM'15) received his Ph.D degree from the University of Oslo in 2010. Currently,  he is an Assistant Professor and the Steven O. Lane Junior Faculty Fellow at the Department of Electrical and Computer Engineering at Virginia Tech, where he leads the Network Science, Wireless, and Security (NetSciWiS) laboratory, within the Wireless@VT research group. His  research interests include wireless networks, game theory, cybersecurity, unmanned aerial vehicles, and cyber-physical systems. Dr. Saad is the recipient of the NSF CAREER award in 2013, the AFOSR summer faculty fellowship in 2014, and the Young Investigator Award from the Office of Naval Research (ONR) in 2015. He was the author/co-author of five conference best paper awards at WiOpt in 2009, ICIMP in 2010, IEEE WCNC in 2012,  IEEE PIMRC in 2015, and IEEE SmartGridComm in 2015. He is the recipient of the 2015 Fred W. Ellersick Prize from the IEEE Communications Society. In 2017, Dr. Saad was named College of Engineering Faculty Fellow at Virginia Tech. Dr. Saad serves as an editor for the IEEE Transactions on Wireless Communications, IEEE Transactions on Communications, and IEEE Transactions on Information Forensics and Security.
\end{IEEEbiography}

\begin{IEEEbiography}[{\includegraphics[width=1in,height=1.25in,clip,keepaspectratio]{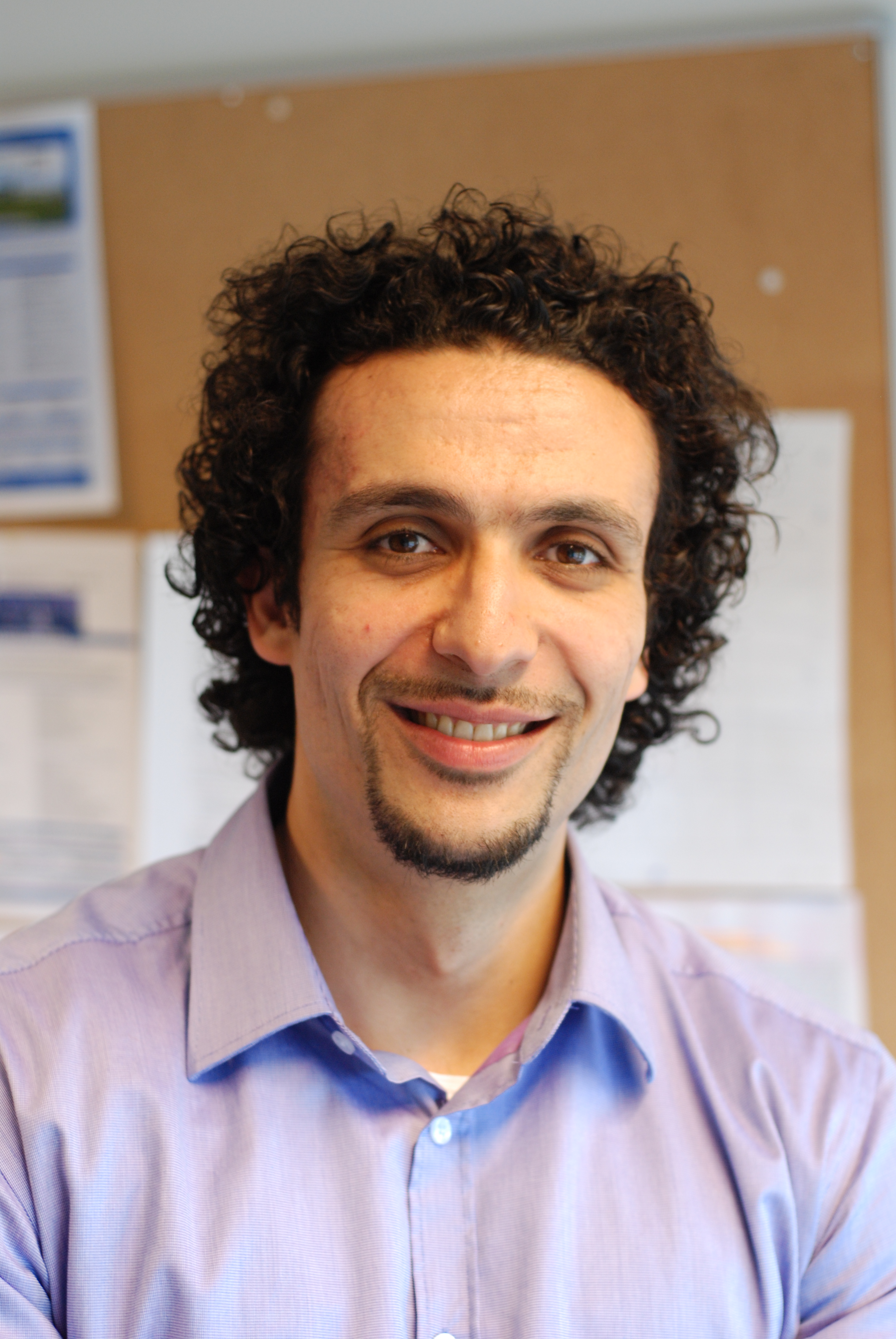}}]{Mehdi Bennis} (Senior Member, IEEE) received his M.Sc. degree in 
Electrical Engineering jointly from the EPFL, Switzerland and the 
Eurecom Institute, France in 2002. From 2002 to 2004, he worked as a 
research engineer at IMRA-EUROPE investigating adaptive equalization 
algorithms for mobile digital
TV. In 2004, he joined the Centre for Wireless Communications (CWC) at 
the University of Oulu, Finland as a research scientist. In 2008, he was 
a visiting researcher at the Alcatel-Lucent chair on flexible radio, 
SUPELEC. He obtained his Ph.D. in December 2009 on spectrum sharing for 
future mobile cellular systems. Currently Dr. Bennis is an Adjunct 
Professor at the University of Oulu and Academy of Finland research 
fellow. His main research interests are in radio resource management, 
heterogeneous networks, game theory and machine learning in 5G networks 
and beyond. He has co-authored one book and published more than 100 
research papers in international conferences, journals and book 
chapters. He was the recipient of the prestigious 2015 Fred W. Ellersick 
Prize from the IEEE Communications Society, the 2016 Best Tutorial 
Prize from the IEEE Communications Society and the 2017 EURASIP Best paper Award for the Journal of wireless communications and networks..
Dr. Bennis serves as an editor for the IEEE Transactions on Wireless Communication
\end{IEEEbiography}

\end{document}